\def\figdir{figs/}
\newtheorem{theorem}{Theorem}
\newtheorem{lemma}[theorem]{Lemma}
\newtheorem{proposition}[theorem]{Proposition}
\newtheorem{corollary}[theorem]{Corollary}
\theoremstyle{definition}
\newtheorem{definition}[theorem]{Definition}
\newtheorem*{gFactor}{Theorem~\ref*{thm:contraction factors}}
\newtheorem*{degreeKirchhoff}{Theorem~\ref*{thm:degree Kirchhoff formula}}
\newcommand{\abs}[1]{\lvert#1\rvert}
\newcommand{\R}{\mathbb{R}}
\newcommand{\norm}[1]{\left\| #1 \right\|}
\newcommand{\grad}{\operatorname{grad}}
\newcommand{\graphG}{\mathbf{G}}
\newcommand{\edgesE}{\mathbf{e}}
\newcommand{\verticesV}{\mathbf{v}}
\newcommand{\edge}{e}
\newcommand{\vertex}{v}
\newcommand{\im}{\operatorname{im}}
\newcommand{\Loops}{\operatorname{Loops}}
\newcommand{\Kirchhoff}{\operatorname{Kf}}
\newcommand{\DegreeKirchhoff}{\operatorname{Kf}^*}
\newcommand{\tr}{\operatorname{tr}}
\renewcommand{\div}{\operatorname{div}}
\newcommand{\vertexa}{a}
\newcommand{\vertexb}{b}
\newcommand{\vertexw}{w}
\newcommand{\vertexv}{\vertex}
\newcommand{\vertexc}{c}
\newcommand{\vertexd}{d}
\newcommand{\vertexq}{q}
\newcommand{\vertexr}{r}
\def\co{\colon\thinspace}
\let\mgp=\marginpar \marginparwidth18mm \marginparsep1mm
\def\marginpar#1{\mgp{\raggedright\tiny #1}}
\let\lbl=\label
\def\label#1{\lbl{#1}\ifinner\else\marginpar{\ref{#1} #1}\ignorespaces\fi}
\begin{document}
\title[]{Radius of gyration, contraction factors, and subdivisions of topological polymers}
\author{Jason Cantarella}
\altaffiliation{Mathematics Department, University of Georgia, Athens, GA, USA}
\noaffiliation
\author{Tetsuo Deguchi}
\altaffiliation{Department of Physics, Ochanomizu University, Tokyo 112--8610, Japan}
\noaffiliation
\author{Clayton Shonkwiler}
\altaffiliation{Department of Mathematics, Colorado State University, Fort Collins, CO, USA}
\noaffiliation
\author{Erica Uehara}
\altaffiliation{Department of Physics, Ochanomizu University, Tokyo 112--8610, Japan}
\noaffiliation

\keywords{Gaussian random polygon, Gaussian random walk, topological polymer, $\theta$-polymer, ring polymer, graph polymer}

\begin{abstract}
We consider the topologically constrained random walk model for topological polymers. In this model, the polymer forms an arbitrary graph whose edges are selected from an appropriate multivariate Gaussian which takes into account the constraints imposed by the graph type. We recover the result that the expected radius of gyration can be given exactly in terms of the Kirchhoff index of the graph. We then consider the expected radius of gyration of a topological polymer whose edges are subdivided into $n$ pieces. We prove that the contraction factor of a subdivided polymer approaches a limit as the number of subdivisions increases, and compute the limit exactly in terms of the degree-Kirchhoff index of the original graph. This limit corresponds to the thermodynamic limit in statistical mechanics and is fundamental in the physics of topological polymers. Furthermore, these asymptotic contraction factors are shown to fit well with molecular dynamics simulations, which should be useful for predicting the $g$-factors of topological polymer models with excluded volume.
\end{abstract}
\date{\today}

\keywords{Radius of gyration; Kirchhoff index; Green's function; degree-Kirchhoff index}

\maketitle

\section{Introduction}
We consider a classical model of polymers, discussed by James, Guth, and Flory~\cite{James:1943hc,James:1947hp,FloryPaulJ1969Smoc} and called \emph{phantom network theory}, in which the polymer molecule consists of a collection of monomers connected by displacement vectors representing effective bond vectors between adjacent monomers. Here each bond vector represents a Kuhn length (or a multiple of the Kuhn length) along the polymer, as in~\cite{rubenstein:colby}. For a linear polymer, we may think of the displacement vectors as independently sampled from multivariate Gaussian distributions, yielding a Gaussian random walk. The effective bond vectors (or displacement vectors) in a ring polymer must obey the additional condition that they must sum to zero, meaning that they are not independently sampled. Recently, polymers with more complicated topologies have been synthesized~\cite{Suzuki:2014fo,Tezuka:2017gh}, leading to an interest in modeling \emph{topological} polymers where the underlying structure is not a path or a cycle but an arbitrary connected multigraph $\graphG$. This introduces a more complicated dependence structure between displacement vectors.

To describe the model in this case, it's helpful to introduce some notation:
\begin{definition} 
Let $\graphG$ be an arbitrary connected multigraph (loop edges and multiple edges are allowed) with an orientation on each edge. A \emph{vertex vector} for $\graphG$ is an $x \in (\mathbb{R}^{d})^\verticesV$ where $x_i \in \mathbb{R}^d$ is the position of vertex $\vertex_i$ and $x^k \in \mathbb{R}^\verticesV$ is the vector of $k$-th coordinates of all vertex positions. An \emph{edge vector} $w$ for $\graphG$ is a $w \in (\mathbb{R}^d)^\edgesE$ where $w_j \in \mathbb{R}^d$ is the displacement along edge $\edge_j$ and $w^k \in \mathbb{R}^\edgesE$ is the vector of all $k$-th coordinates of the edge displacements. These are illustrated in Figure~\ref{fig:edge and vertex vectors}.	
\end{definition}
\begin{figure}
	\raisebox{-0.5\height}{\includegraphics[width=1.2in]{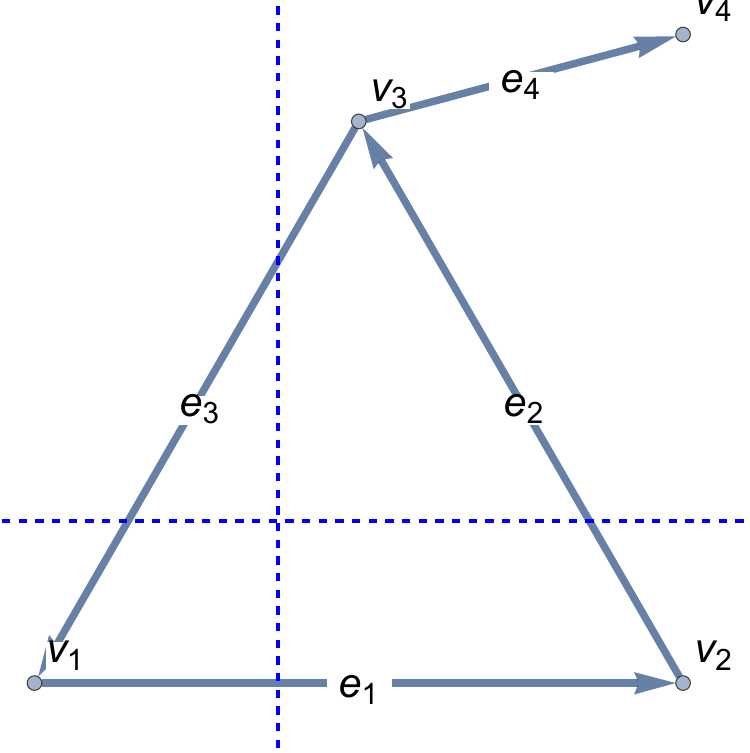}} \hspace{0.1in} 
	\begin{tabular}{c}
	  $x = \left( \right.$ \raisebox{-0.35\height}{\includegraphics[width=2.3in]{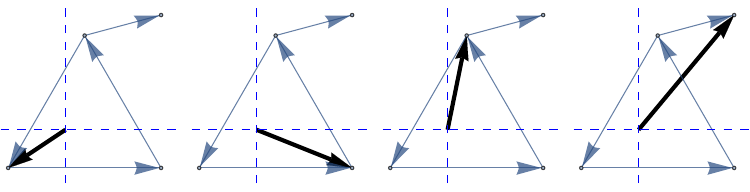}} $\left.\right)$ \\
	  $w = \left( \right.$ \raisebox{-0.35\height}{\includegraphics[width=2.3in]{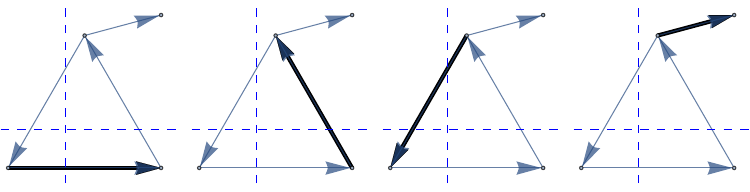}} $\left.\right)$ 
	  \end{tabular}
\caption{A particular graph embedding in $\R^2$, along with the components of its vertex vector $x \in (\R^{2})^4$ and edge vector $w \in (\R^{2})^4$.}
\label{fig:edge and vertex vectors} 
\end{figure}

The vertex and edge vectors are related by the $\verticesV \times \edgesE$ \emph{incidence matrix} $B$ of the graph $\graphG$, where 
\begin{equation*}
B_{ij} = \begin{cases}
			+1, & \text{if $\vertex_i$ is (only) the head of $\edge_j$} \\
			-1, & \text{if $\vertex_i$ is (only) the tail of $\edge_j$} \\
			 0, & \text{if $\vertex_i$ is both the head and tail of $\edge_j$}\\
			 0, & \text{if $\vertex_i$ is neither the head nor tail of $\edge_j$}
		 \end{cases}
\end{equation*}
and for each coordinate $k$, $w^k = B^T x^k$. We note that multiple $\edge_j$ may share the same head and tail vertices; these are simply repeated columns in $B$. While any $x^k \in \R^\verticesV$ may be a vertex vector, only $w^k \in \im B^T$ can be an edge vector. These $w^k$ have the special property that the sum of $w^k_i$ around any loop in $\graphG$ vanishes. We call $w$ an \emph{embeddable} edge vector if every $w^k \in \im B^T$. We can then define	

\begin{definition}
A \emph{Gaussian topological polymer} or \emph{topologically constrained random walk} (TCRW) in $\R^d$ with underlying (multi)graph $\graphG$ has all $w^k$ sampled independently from a standard normal distribution on the embeddable edge vectors $\im B^T \subset \R^{\edgesE}$.
\end{definition} 

We have previously shown~\cite{tcrw,MegaTCRW} that Gaussian TCRWs are exactly Gaussian phantom networks. We also proved that
\begin{theorem}[{\cite{tcrw}}]
Let $A^+$ denote the Moore--Penrose pseudoinverse of a matrix $A$, and $L(\graphG) = BB^T$ denote the graph Laplacian of a connected multigraph $\graphG$. If $x$ is a vertex vector for the TCRW with underlying graph $\graphG$, then the $x^k \in \R^{\verticesV}$ are sampled independently from multivariate normals with mean 0 and covariance matrix $L(\graphG)^+$.
\label{thm:vertex distribution}
\end{theorem}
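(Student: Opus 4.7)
The plan is to exploit linearity throughout: we have $w^k = B^T x^k$, so up to the ambiguity in the kernel of $B^T$ (which for a connected graph is the one-dimensional span of $(1,\ldots,1)$, i.e., rigid translations of the polymer), the vertex vector is determined by $x^k = (B^T)^+ w^k$. This is the unique preimage of $w^k$ that lies in $(\ker B^T)^\perp = \im B$; equivalently, it is the centered polymer whose vertex centroid is at the origin. Since the distribution of a vertex configuration is defined modulo translations anyway, we may take this as the definition of $x^k$.

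First I would record the distribution of $w^k$ explicitly. By construction, $w^k$ is standard Gaussian on the subspace $\im B^T \subset \R^{\edgesE}$, so it has mean zero and covariance equal to the orthogonal projector $P = B^T (B^T)^+$ onto $\im B^T$. The $w^k$ are independent across coordinates $k$ by definition, and so the $x^k = (B^T)^+ w^k$ are independent across $k$ as well, with mean zero.

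Second, I would compute
\begin{equation*}
\Cov(x^k) \;=\; (B^T)^+ \, \Cov(w^k) \, \bigl((B^T)^+\bigr)^T \;=\; (B^T)^+ \, P \, \bigl((B^T)^+\bigr)^T.
\end{equation*}
The main identity I need is $(B^T)^+ P = (B^T)^+$, which follows immediately from the Moore--Penrose relation $(B^T)^+ B^T (B^T)^+ = (B^T)^+$ once one observes $P = B^T (B^T)^+$. This reduces the covariance to $(B^T)^+\bigl((B^T)^+\bigr)^T$. Using $(B^T)^+ = (B^+)^T$, this is the same as $(B^+)^T B^+$, and a quick SVD calculation (write $B = U\Sigma V^T$, then every operator in sight is diagonalized in the $U,V$ bases, and the product collapses to $U(\Sigma\Sigma^T)^+ U^T$) shows
\begin{equation*}
(B^T)^+ \bigl((B^T)^+\bigr)^T \;=\; (BB^T)^+ \;=\; L(\graphG)^+.
\end{equation*}

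The only subtle point is the translation ambiguity: $x^k$ is not literally determined by $w^k$ until we make a choice, and the pseudoinverse choice $x^k = (B^T)^+ w^k$ is what makes the statement ``$x^k$ is multivariate normal with covariance $L(\graphG)^+$'' literally true (as opposed to true only up to a deterministic translation). I would flag this convention at the outset so that the computation above reads unambiguously. Everything else is routine linear algebra of the pseudoinverse, and independence of the coordinate components follows trivially from the independence built into the definition of the tcrw.
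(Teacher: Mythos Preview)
Your argument is correct. The linear-algebra chain is sound: the pushforward of the degenerate Gaussian on $\im B^T$ under $(B^T)^+$ is Gaussian with covariance $(B^T)^+ P\bigl((B^T)^+\bigr)^T = (B^+)^T B^+ = (BB^T)^+ = L(\graphG)^+$, and your SVD verification of the middle identity is the cleanest way to see it. You are also right to flag the centering convention: since $L^+$ is singular with kernel spanned by the all-ones vector, the statement ``$x^k$ is multivariate normal with covariance $L^+$'' already forces $x^k$ to lie in the mean-zero hyperplane, which is exactly the pseudoinverse choice $x^k = (B^T)^+ w^k$.

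As for comparison with the paper: there is nothing to compare. The paper does not prove this theorem here; it is quoted from the authors' earlier work (the sentence introducing it reads ``We proved in previous work~\cite{tcrw} that\ldots''). So your write-up is a self-contained proof of a result the present paper merely cites. If you want to align it more closely with how that earlier paper likely argues, the ingredients are the same---push a Gaussian forward through a linear map and simplify with pseudoinverse identities---so any differences would be cosmetic.
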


In this paper, we will study the distribution of squared distances $\norm{x_i - x_j}^2$ between vertices in Gaussian TCRWs, and hence in phantom networks. We will first note that the expectation $\mathcal{E}(\norm{x_i - x_j}^2;\graphG)$ is equal to the resistance distance~\cite{Klein:1993bg,Gutman:1996hq,Mohar:1993gh} between $\vertex_i$ and $\vertex_j$ in $\graphG$ and show that the expectation $\mathcal{E}(R_g^2;\graphG)$ of the squared radius of gyration of a Gaussian TCRW is given in terms of the Kirchhoff index (or quasi-Wiener index) of $\graphG$. 

In practice, most of the graphs used to model topological polymers are constructed from a multigraph $\graphG$ by replacing each edge in $\graphG$ by a chain of $n$ edges for some large $n$. Calling the resulting graph $\graphG_n$, we are interested in determining the asymptotics of the Kirchhoff index of $\graphG_n$ as $n\to \infty$. 

\begin{theorem}
If $\graphG$ is a connected multigraph and $\graphG_n$ is the graph obtained by subdividing each edge of $\graphG$ into $n$ pieces, then 
\begin{equation*}
\lim_{n \rightarrow \infty} \frac{1}{n^3} \Kirchhoff(\graphG_n) = \frac{2 \Loops(\graphG) - 1}{12} \edgesE(\graphG) + \frac{1}{4} \DegreeKirchhoff(\graphG)
\end{equation*}
where $\Loops(\graphG) = \edgesE(\graphG) - \verticesV(\graphG) + 1$ is the cycle rank of $\graphG$ and $\DegreeKirchhoff(\graphG)$ is the ``degree-Kirchhoff index'' introduced by Chen and Zhang~\cite{Chen:2007dv}:
\begin{equation*}
\DegreeKirchhoff(\graphG) = \sum_{\vertex_i < \vertex_j \in \graphG} (\deg_\graphG \vertex_i \deg_\graphG \vertex_j ) r_{\vertex_i \vertex_j}.
\end{equation*}
\label{thm:degree Kirchhoff formula}
\end{theorem}

Along the way, we will prove some independently interesting results about resistances measured between points along edges in a resistor network.

Given the connection between Kirchhoff index and expected radius of gyration, this tells us something about the asymptotics of radius of gyration as we subdivide edges of a graph, corresponding to the thermodynamic limit in statistical mechanics as the system size goes to infinity. This result is most naturally expressed in terms of the \emph{contraction factor} or $g$-factor of a polymer, which is defined to be the ratio of the (expected) radius of gyration of the polymer to the expected radius of gyration of a linear polymer of the same length embedded in a space of the same dimension.

\begin{theorem}
For any connected multigraph $\graphG$ (including loop and multiple edges), if $\graphG_n$ is the graph created by dividing each edge of $\graphG$ into $n$ pieces, then the contraction factor obeys
\begin{equation*}
\lim_{n \rightarrow \infty} g(\graphG_n) = 
\frac{3}{\edgesE(\graphG)^2} \left( \operatorname{tr} \mathcal{L}^+(\graphG) + \frac{1}{3} \operatorname{Loops}(\graphG) - \frac{1}{6} \right).
\end{equation*}
Here $g(\graphG_n)$ is the $g$-factor of $\graphG_n$, $\operatorname{Loops}(\graphG) = \edgesE(\graphG) - \verticesV(\graphG) + 1$ is the cycle rank of $\graphG$, $\mathcal{L}(\graphG)$ is the \emph{normalized} graph Laplacian of $\graphG$, and $\mathcal{L}^+(\graphG)$ is the Moore--Penrose pseudoinverse of $\mathcal{L}(\graphG)$.
\label{thm:contraction factors}
\end{theorem}

We will use this theorem to analyze the relative sizes of the topological polymers in solution synthesized by Tezuka and compare the scaling coefficients with a molecular dynamics calculation of expected radii of gyration. We will see that there is an excellent linear fit between our coefficients and the results of the simulation. We suggest that the linear fit should be useful for estimating the $g$-factor of real topological polymers from the calculations with the ideal chain model. Recall that the estimates of the $g$-factor evaluated in molecular dynamics are for real topological polymer models with excluded volume, and hence are quite expensive to compute, while our coefficients are simple to compute.

The asymptotic contraction factor should be one of the most fundamental physical or dynamical quantities for topological polymers in solution. In fact, one can use this quantity to estimate various physical quantities of topological polymers in solution such as the viscosity coefficient. The asymptotic contraction factor should also be relevant to experiments on the viscosity of polymer solutions~\cite{rubenstein:colby}.   

The asymptotic contraction factors themselves could be useful for studying the mean-square radius of gyration for various real topological polymer models with excluded volume~\cite{zhu_radius_2016,uehara_statistical_2016}. It has been shown that the ratios among the estimates of the mean-square radius of gyration for real topological polymer models with different graphs and those of ideal topological polymer models with the same set of graphs are almost the same if the functionality (i.e., the degree) at each vertex is limited up to three~\cite{uehara_statistical_2016}. Thus, even the asymptotic value of the mean-square radius of gyration for an ideal topological polymer model with graph $\graphG_n$ can be useful to estimate that of a real topological polymer model with the corresponding graph.    

There is another interesting approach to the spectrum of the graph Laplacian for a topological polymer whose edges are subdivided into $n$ pieces.  A method for reducing the graph Laplacian of the Gaussian topological polymer of graph $\graphG_n$ which is obtained by replacing each chain of a given graph $\graphG$ by a chain of $n$ edges was derived~\cite{eichinger_distribution_1978}. In order to evaluate the $g$-factor by the method, one has to evaluate all the eigenvalues of the reduced matrix, which is not practical, in general. However, some information on the spectrum of eigenvalues can be investigated through it.

\section{The multigraph Laplacian and resistor networks}

We commented above that the matrix $L = BB^T$ was known as the graph Laplacian of $\graphG$. We now expand on this point, summarizing some of the widely developed and rich theory of graph Laplacians. First, we observe that $L$ is well-defined for graphs $\graphG$ with multiple edges and loop edges because $B$ is defined for such graphs. For multigraphs, the \emph{degree} $\deg \vertex_i$ of a vertex i counts the number of edges leaving or arriving at $\vertex_i$ (loop edges count twice), the degree matrix $D$ is the diagonal matrix of vertex degrees and the adjacency matrix $A$ is defined by $-A_{ij} = \# \text{ one edge paths from $\vertex_i$ to $\vertex_j$}$. Every loop edge $\vertex_i \rightarrow \vertex_i$ provides two different paths from $\vertex_i$ to $\vertex_i$, and so contributes $2$ to the diagonal matrix. 

An easy computation shows $L = D - A$, or 
\begin{equation*}
L_{ij} = \begin{cases}
			\deg \vertex_i - 2 (\text{\# edges $\vertex_i \rightarrow \vertex_i$ in $\graphG$})
			& \text{if $i = j$}, \\
			-\text{(\# edges $(\vertex_j \rightarrow \vertex_i$ or $\vertex_i \rightarrow \vertex_j) \in \graphG$)} 
			& \text{if $i \neq j$}. 
		\end{cases}
\end{equation*}
For functions $p \co\!\! \verticesV(\graphG) \rightarrow \R$ defined on the vertices of a multigraph $\graphG$ viewed as vectors in $\R^\verticesV$ we can interpret the Laplacian as a linear operator. Explicitly
\begin{equation}
L p (\vertex_i) = (\deg \vertex_i) \, p(\vertex_i) - \hspace{-0.2in} \sum_{\vertex_j \text{ adjacent to } \vertex_i} \hspace{-0.2in} p(\vertex_j) 
\label{eq:local laplacian}
\end{equation}
where $\vertex_j$ is \emph{adjacent} to $\vertex_i$ in $\graphG$ if there is an edge $\vertex_i \rightarrow \vertex_j$ or $\vertex_j \rightarrow \vertex_i$ in $\graphG$.
Note that for each loop edge $\vertex_i \rightarrow \vertex_i$ we increment $\deg \vertex_i$ by 2 \emph{and} count $\vertex_i$ as adjacent to itself in two ways. These contributions cancel in~\eqref{eq:local laplacian}.

For functions $p \co\!\! \verticesV(\graphG) \rightarrow \R$ defined on the vertices of a graph $\graphG$ viewed as vectors in $\R^\verticesV$ we can interpret $B^T$ as the \emph{gradient} operator. If $\edge_k = \vertex_i \rightarrow \vertex_j$,
\begin{equation*}
\grad p(\edge_k) = p(\vertex_j) - p(\vertex_i) = (B^T p)_k.
\end{equation*}
For functions $u \co\!\! \edgesE(\graphG) \rightarrow \R$ defined on the edges of $\graphG$ and viewed as vectors in $\R^\edgesE$, we can interpret $B$ as the negative of the \emph{divergence} operator
\begin{equation*}
\div u(\vertex_i) = \sum_{\edge_j = \vertex_i \rightarrow \cdot} u(\edge_j) - \sum_{\edge_j = \cdot \rightarrow \vertex_i} u(\edge_j) = (-Bu)_i.
\end{equation*}
This means that the graph Laplacian $L = -\grad \div$. This agrees with the sign convention in Riemannian geometry~\cite{ChavelIsaac.1984EiRg}, but is opposite from the sign convention in mathematical physics~\cite{1962Momp}.

If the graph $\graphG$ is a conductive network, then a function $p\co\!\! \verticesV \rightarrow \R$ is called a \emph{potential}. Along every edge of $\graphG$, the difference in potential $\grad p = U$ is the corresponding \emph{voltage}.  Ohm's law states that $U = IR$ where $I$ is the current and $R$ the resistance, so if each edge has unit resistance, $I = U$. Kirchhoff's first law says that the total current flowing into and out of each vertex are equal, or that $\div I = 0$. Combining these, we see that $Lp = 0$ in an isolated system. 

If $\graphG$ is connected, the kernel of $L$ is known to be one-dimensional and spanned by the constant functions. This means that in an isolated system, $p$ is a constant and no current flows. However, if we introduce unit current at $\vertex_i$ and remove unit current at $\vertex_j$, then the potential is the minimum-norm solution $p^{ij}$ to the Poisson problem
\begin{equation}
(Lp^{ij})(\vertex_k) = e_i - e_j = \begin{cases} 
				             	+1, & \text{if $k = i$}, \\
								-1, & \text{if $k = j$}.
							  \end{cases}
\label{eq:poisson problem 1}
\end{equation}
Note that the right hand side is in $(\ker L)^\perp = \im L^T = \im L$, so $p^{ij}$ is an exact solution and not a least-squares solution. The resistance $r_{ij}$ between $\vertex_i$ and $\vertex_j$ is then 
\begin{equation}
r_{ij} := p^{ij}(\vertex_i) - p^{ij}(\vertex_j).
\label{eq:resistance distance definition}
\end{equation}
In graph theory~\cite{Klein:1993bg}, this value is called the \emph{resistance distance} from $\vertex_i$ to $\vertex_j$ in $\graphG$. 

To analyze the resistance distance, we will need to recall the theory of the Moore--Penrose pseudoinverse of a matrix $A$. First, if $A$ has singular value decomposition $A = U \Sigma V^T$ with singular values $\sigma_i$ then $A^+ := V \Sigma^+ U^T$, where $\Sigma^+$ is a diagonal matrix with entries
\begin{equation*}
\sigma'_i = \begin{cases} 
				\frac{1}{\sigma_i}, & \text{if $\sigma_i \neq 0$},\\
				0, &\text{if $\sigma_i = 0$}.
			\end{cases}
\end{equation*}
The pseudoinverse has many useful properties, but the one that we will use the most often is the fact that $x = A^+ b$ is the unique minimum-norm least-squares solution to the linear system $A x = b$. Explicitly, this means that $x \in \ker A^\perp$ and $Ax$ is the orthogonal projection of $b$ to $\im A$. It is also true that $A^+ A$ is orthogonal projection to $\ker A^\perp$ and $A A^+$ is orthogonal projection to $\im A$. 

We note that this property of the pseudoinverse implies that minimum-norm least-squares solutions to linear systems are linear in their right-hand sides. If $x$ and $y$ solve $A x = b$ and $A y = c$ in this sense, then $x = A^+b$ and $y = A^+c$. But if $w$ is the minimum-norm least squares solution to $A w = b+c$, then $w \in (\ker A)^{\perp}$ so applying $A^+$ to both sides we get 
\begin{equation}
w = A^+ A w = A^+(b+c) = A^+b + A^+c = x + y.
\label{eq:superposition of solutions}
\end{equation}
In differential equations, this property is usually called ``superposition of solutions''.

We now define the potential function $p^i$ to be the unique minimum-norm least-squares solution to the Poisson problem $L p^i = e_i$. Since $e_i$ is not in $\im L$, the solution $p^i$ exactly solves the slightly different Poisson problem
\begin{equation}
L p^i(\vertexw) = \begin{cases}
					1-\frac{1}{\verticesV}, & \text{if $\vertexw = \vertex_i$} \\
					-\frac{1}{\verticesV}, & \text{otherwise.}
				   \end{cases}
\label{eq:poisson problem 1a}
\end{equation}

We can now prove
\begin{proposition}
If $L$ is the graph Laplacian of a connected multigraph $\graphG$, the resistance distance $r_{ij}$ between vertices $v_i$ and $v_j$, the potential functions $p^{ij}$, $p^i$, and $p^j$ and the matrix $L^+$ are related in the following ways:
\begin{equation*}
p^{ij} = L^+(e_i - e_j), \quad p^{i} = L^+ e_i, \quad p^{j} = L^+ e_j, \quad p^{ij} = p^{i} - p^{j},
\end{equation*}
while
\begin{align}
r_{ij} = p^{ij}(\vertex_i) - p^{ij}(\vertex_j) &= p^{i}(v_i) - p^{j}(v_i) - p^i(v_j) + p^j(v_j) \nonumber \\
       &= (L^+)_{ii} - (L^+)_{ij} - (L^+)_{ji} + (L^+)_{jj}.
\label{eq:resistance distance formula}
\end{align}
Further, $L^+ L p^{ij} = p^{ij}$ while $L^+L p^i = p^i$.
\label{prop:resistance distance formula}
\end{proposition}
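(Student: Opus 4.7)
The plan is to derive each assertion directly from the definitions of $p^{ij}$, $p^i$, $p^j$ and the characterization of the Moore--Penrose pseudoinverse recalled earlier in the section, which says that $x = A^+ b$ is precisely the minimum-norm least-squares solution of $Ax = b$.

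First, I would establish the three identifications $p^{ij} = L^+(e_i - e_j)$, $p^i = L^+ e_i$, and $p^j = L^+ e_j$ by matching each function against its defining Poisson problem. The potential $p^{ij}$ is introduced as the minimum-norm solution of \eqref{eq:poisson problem 1}, which, since $e_i - e_j \in \im L$ so the least-squares residual vanishes, coincides with $L^+(e_i - e_j)$. Similarly $p^i$ is defined as the minimum-norm least-squares solution of $Lp^i = e_i$, which is $L^+ e_i$ by definition; the same goes for $p^j$.

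Next, the equality $p^{ij} = p^i - p^j$ drops out of the linearity of $L^+$, which is exactly the superposition identity \eqref{eq:superposition of solutions}: $L^+(e_i - e_j) = L^+ e_i - L^+ e_j$. From here the resistance distance formula \eqref{eq:resistance distance formula} is a componentwise evaluation: $p^{ij}(\vertex_k) = (L^+(e_i - e_j))_k = (L^+)_{ki} - (L^+)_{kj}$, and substituting $k=i$ and $k=j$ and subtracting gives the claimed four-term expression. (Symmetry of $L^+$, inherited from symmetry of $L$, could be used to merge the cross-terms, but is not actually needed for the stated formula.) For the final identities $L^+ L p^{ij} = p^{ij}$ and $L^+ L p^i = p^i$, I would invoke the fact, stated in the pseudoinverse paragraph, that $L^+ L$ is the orthogonal projection onto $(\ker L)^\perp$; since $p^{ij}$ and $p^i$ are minimum-norm (least-squares) solutions, they lie in $(\ker L)^\perp$ by construction and are therefore fixed by this projection.

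There is no real obstacle here — the proof is a bookkeeping exercise in the definitions. The one point meriting a sentence of care is to confirm that the two descriptions of $p^{ij}$, as the exact minimum-norm solution of \eqref{eq:poisson problem 1} and as $L^+(e_i - e_j)$, coincide; this is because $e_i - e_j \in \im L$, so $LL^+(e_i-e_j) = e_i - e_j$ exactly (using that $LL^+$ projects onto $\im L$), and $L^+(e_i - e_j) \in (\ker L)^\perp$ automatically.
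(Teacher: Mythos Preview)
Your proof is correct and follows essentially the same approach as the paper's: invoke the pseudoinverse characterization to identify $p^{ij}$, $p^i$, $p^j$ with $L^+(e_i-e_j)$, $L^+e_i$, $L^+e_j$; use superposition \eqref{eq:superposition of solutions} for $p^{ij}=p^i-p^j$; read off the matrix entries; and conclude from $p^{ij},p^i\in(\ker L)^\perp$ that $L^+L$ fixes them. Your extra sentence verifying that $e_i-e_j\in\im L$ makes the least-squares solution exact is a reasonable elaboration but not a departure.
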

\begin{proof}
Since $p^{ij}$ and $p^{i}$ are defined to be minimum-norm least squares solutions to Poisson problems, they are given by $p^{ij} = L^+ (e_i - e_j)$ and $p^i = L^+ e_i$. The linearity property~\eqref{eq:superposition of solutions} then implies that $p^{ij} = p^i - p^j$. Then~\eqref{eq:resistance distance formula} follows from~\eqref{eq:resistance distance definition} since $p^i = L^+ e_i$ is the $i$th row of $L^+$, so its values at $\vertex_i$ and $\vertex_j$ are the matrix elements $(L^+)_{ii}$ and $(L^+)_{ij}$. Finally, since $p^{ij}$ and $p^i$ are \emph{minimum-norm} solutions to Poisson problems, they are both in $(\ker L)^\perp$ and hence fixed by $L^+L$ (the orthogonal projection onto $(\ker L)^\perp$).
\end{proof}
The sum of all the resistance distances between vertices in a graph is known as the Kirchhoff index (or quasi-Wiener index) of the graph and denoted 
\begin{equation*}
\Kirchhoff(\graphG) = \sum_{i < j} r_{ij}.
\end{equation*}
The Kirchhoff index is a measure of the connectivity of the graph-- graphs which are very well connected with short paths between vertices have small Kirchhoff indices while graphs with longer paths between vertices have larger ones. There is a well-developed theory of the Kirchhoff index in mathematical chemistry~\cite{Klein:1993bg,Klein:2002vx,bonchev_molecular_1994,Gutman:1996hq,klein_random_2004,Bapat:cx,Zhang:2006bm,babic_resistance-distance_2002}.

\section{Expected squared chordlengths and resistances}

From Theorem~\ref{thm:vertex distribution}, it follows immediately that the marginal distribution of the displacement vector $x_i - x_j$ between vertices $\vertex_i$ and $\vertex_j$ is Gaussian with mean zero. The variance of the Gaussian can be computed by taking
\begin{equation*}
\mathcal{E}\left((x_i^k - x_j^k)^2;\graphG\right) = \mathcal{E}\left( (x_i^k)^2 + (x_j^k)^2 - 2 (x_i^k x_j^k);\graphG \right) = L^+_{ii} + L^+_{jj} - L^+_{ij} - L^+_{ji}
\end{equation*}
since $L^+$ is the covariance matrix of the $x_i^k$. This proves
\begin{proposition}
Let $\vertex_i$ and $\vertex_j$ be vertices in a multigraph $\graphG$ and $r_{ij}$ be the resistance distance between them. The expected squared distance between $\vertex_i$ and $\vertex_j$ in a Gaussian TCRW with multigraph $\graphG$ in $\R^d$ is given by $\mathcal{E}\left(\norm{x_i - x_j}^2;\graphG \right) = d (L^+_{ii} + L^+_{jj} - L^+_{ij} - L^+_{ji}) = d \, r_{ij}$.
\label{prop:expected squared distance Lplus}
\end{proposition}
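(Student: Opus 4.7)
The plan is to reduce the proposition to a straightforward calculation using the vertex distribution theorem and then invoke the resistance distance formula from Proposition~\ref{prop:resistance distance formula}. First, I would invoke Theorem~\ref{thm:vertex distribution}, which tells us that each coordinate vector $x^k \in \R^{\verticesV}$ of a tcrw vertex vector is drawn from a multivariate normal with mean zero and covariance matrix $L^+$, and that distinct coordinate vectors $x^k$ and $x^{k'}$ are sampled independently.

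Next, I would work coordinate by coordinate. Since $x^k$ has covariance $L^+$, the pair $(x_i^k, x_j^k)$ is jointly Gaussian with mean zero, variances $L^+_{ii}$ and $L^+_{jj}$, and covariance $L^+_{ij} = L^+_{ji}$ (using symmetry of $L^+$). Expanding the square and using linearity of expectation gives
\begin{equation*}
\mathcal{E}\bigl((x_i^k - x_j^k)^2;\graphG\bigr) = \mathcal{E}\bigl((x_i^k)^2\bigr) + \mathcal{E}\bigl((x_j^k)^2\bigr) - 2\,\mathcal{E}\bigl(x_i^k x_j^k\bigr) = L^+_{ii} + L^+_{jj} - L^+_{ij} - L^+_{ji}.
\end{equation*}
Then I would sum $\norm{x_i - x_j}^2 = \sum_{k=1}^d (x_i^k - x_j^k)^2$ across the $d$ independent coordinate directions, which multiplies the expectation by $d$ and produces the first equality in the proposition.

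Finally, to obtain the identification with the resistance distance, I would simply quote formula~\eqref{eq:resistance distance formula} from Proposition~\ref{prop:resistance distance formula}, which says precisely that $r_{ij} = (L^+)_{ii} - (L^+)_{ij} - (L^+)_{ji} + (L^+)_{jj}$. There is no real obstacle here: the entire argument rests on Theorem~\ref{thm:vertex distribution} doing the heavy lifting of connecting the tcrw to the pseudoinverse of the Laplacian, together with the standard identity $\Var(X-Y) = \Var X + \Var Y - 2\Cov(X,Y)$ for Gaussian pairs. The only thing to be careful about is noting that the $d$ coordinate directions contribute independently and additively, so that no cross terms between different coordinates enter the calculation.
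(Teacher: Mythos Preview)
Your proposal is correct and follows essentially the same route as the paper: invoke Theorem~\ref{thm:vertex distribution} to identify $L^+$ as the covariance of each coordinate vector, expand $\mathcal{E}\bigl((x_i^k - x_j^k)^2\bigr)$ to get $L^+_{ii} + L^+_{jj} - L^+_{ij} - L^+_{ji}$, sum over the $d$ independent coordinates, and then read off the resistance distance from~\eqref{eq:resistance distance formula}. The paper's version is terser but the argument is the same.
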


We now give two examples. It is well known that in a Gaussian random polygon (ring polymer) in $\R^3$ with $n$ vertices, the expected squared chordlength between vertices $\vertex_1$ and $\vertex_j$ separated by $j$ edges is given by $3j(n-j)/n$~\cite{1949JChPh..17.1301Z}. We may recompute this result rather simply using resistances. 

There are two paths from $\vertex_1$ to $\vertex_j$, one composed of $j$ resistors in series and the other composed of $n-j$ resistors in series. The two paths are in parallel, so the total resistance is
\begin{equation}
r_{1j} = \frac{1}{\frac{1}{j} + \frac{1}{n-j}} = \frac{j(n-j)}{n}.
\label{eq:expected squared distance in cycle graph}
\end{equation}

Extending this idea, suppose we have a ``multitheta'' graph consisting of $m$ arcs of $n$ edges joining two vertices. The total resistance of each arc is $n$, and $m$ such arcs are in parallel, so the resistance between junctions is 
\begin{equation*}
\frac{1}{\frac{1}{n} + \cdots + \frac{1}{n}} = \frac{1}{\frac{m}{n}} = \frac{n}{m}.
\end{equation*}
This recovers an asymptotic result of Deguchi and Uehara~\cite{Uehara:2018bb} and, independently, of Zhu, Wang, Li, and Wang~\cite{zhu_radius_2016}. It also shows the interesting phenomenon that the junctions are expected to be closer together when more arcs join them, even if the length of the arcs remains constant. 

\section{The radius of gyration}

A standard measure of the effective size of a polymer in solution is the radius of gyration.
\begin{definition}
The radius of gyration of a TCRW with vertex vector $x$ in $\R^d$ is given by
\begin{equation*}
R_g^2(x) = \frac{1}{2 \verticesV^2} \sum_{i,j \in 1}^\verticesV \norm{x_i - x_j}^2.
\end{equation*}
The expected radius of gyration of a Gaussian TCRW with underlying graph $\graphG$ is given  by the expectation of $R_g^2(x)$ when each coordinate vector $x^k$ is chosen according to a multivariate Gaussian with mean zero and covariance matrix $L^+(\graphG)$. In polymer science, this is the mean-square radius of gyration~$\left< s^2 \right>$ of the molecule~(cf. \cite[1.17]{stepto:2015be}); we denote it by $\mathcal{E}(R_g^2;\graphG)$. 
\label{def:radius of gyration}
\end{definition}
We can use our result connecting resistance distances with expectations of chordlengths to compute the expected radius of gyration rather simply, recovering a classical formula for the radius of gyration of a molecule of arbitrary topology~\cite[eq.\ 18a]{Eichinger1980}:
\begin{theorem}
For any Gaussian TCRW in $\R^d$ with multigraph $\graphG$, we have 
\begin{equation*}
\mathcal{E}\!\left(R_g^2;\graphG\right) = \frac{d}{\verticesV^2} \Kirchhoff(\graphG) = \frac{d}{\verticesV} \tr L^+ = \frac{d}{\verticesV} \sum_{i=1}^{\verticesV - 1} \frac{1}{\lambda_i}
\end{equation*}
where $\lambda_i$ are the nonzero eigenvalues of the graph Laplacian $L(\graphG)$.
\label{thm:expected radius of gyration}
\end{theorem}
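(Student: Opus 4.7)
The plan is to reduce the computation of $\mathcal{E}(R_g^2;\graphG)$ to a sum of expected squared pairwise distances using linearity of expectation, then invoke Proposition~\ref{prop:expected squared distance Lplus} to convert each term into a resistance distance, and finally rewrite the result in terms of $\tr L^+$ and the spectrum of $L$.

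First, I would apply linearity of expectation to the defining sum in Definition~\ref{def:radius of gyration}, which gives
\begin{equation*}
\mathcal{E}(R_g^2;\graphG) = \frac{1}{2\verticesV^2} \sum_{i,j=1}^\verticesV \mathcal{E}\!\left(\norm{x_i - x_j}^2;\graphG\right) = \frac{d}{2\verticesV^2} \sum_{i,j=1}^\verticesV r_{ij},
\end{equation*}
using Proposition~\ref{prop:expected squared distance Lplus}. Since $r_{ii} = 0$ and $r_{ij} = r_{ji}$, the double sum equals $2 \Kirchhoff(\graphG)$, which yields the first equality.

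For the second equality, I would substitute the formula $r_{ij} = L^+_{ii} + L^+_{jj} - L^+_{ij} - L^+_{ji}$ from~\eqref{eq:resistance distance formula} and collect terms:
\begin{equation*}
\sum_{i,j=1}^\verticesV r_{ij} = 2\verticesV \tr L^+ - 2 \sum_{i,j=1}^\verticesV L^+_{ij} = 2\verticesV \tr L^+ - 2\, \mathbf{1}^T L^+ \mathbf{1}.
\end{equation*}
Because $\graphG$ is connected, $\ker L$ is spanned by $\mathbf{1}$, so $L^+ \mathbf{1} = 0$ by the characterization of $L^+$ as the orthogonal projection of its inputs onto $(\ker L)^\perp$ followed by inversion on the image. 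Hence $\Kirchhoff(\graphG) = \verticesV \tr L^+$, giving $\mathcal{E}(R_g^2;\graphG) = (d/\verticesV)\tr L^+$.

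The last equality is an immediate consequence of the spectral decomposition. Since $L$ is symmetric positive semidefinite with a one-dimensional kernel, its singular values are its eigenvalues $0 = \lambda_0 < \lambda_1 \leq \cdots \leq \lambda_{\verticesV-1}$, and the pseudoinverse construction recalled just before Proposition~\ref{prop:resistance distance formula} shows that the nonzero eigenvalues of $L^+$ are exactly $1/\lambda_i$ for $i = 1, \ldots, \verticesV-1$. Summing these gives $\tr L^+ = \sum_{i=1}^{\verticesV-1} 1/\lambda_i$. There is no real obstacle here; the only subtlety worth stating carefully is the vanishing of $\mathbf{1}^T L^+ \mathbf{1}$, which depends on connectivity and the pseudoinverse identity $L^+ L = L L^+ = $ orthogonal projection onto $(\ker L)^\perp$.
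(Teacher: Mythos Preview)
Your proof is correct and follows essentially the same approach as the paper: apply Proposition~\ref{prop:expected squared distance Lplus} to convert the sum of expected squared distances into entries of $L^+$, use that the all-ones vector lies in $\ker L^+ = \ker L$ to kill the grand sum $\sum_{i,j}(L^+)_{ij}$, and read off the eigenvalue expression from the pseudoinverse construction. The only cosmetic difference is that you isolate the Kirchhoff-index equality first and then derive the trace formula, whereas the paper computes the trace formula directly and mentions the Kirchhoff index along the way.
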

\begin{proof}
Substituting the result of~Proposition~\ref{prop:expected squared distance Lplus} into Definition~\ref{def:radius of gyration}, we get 
\begin{equation*}
R_g^2 = \frac{d}{2 \verticesV^2} \sum_{i,j} (L^+)_{ii} + (L^+)_{jj} - 2 (L^+)_{ij} 
= \frac{d}{2 \verticesV^2} \left( 2 \verticesV \tr L^+ - 2 \sum_{i,j} (L^+_{ij}) \right).
\end{equation*}
To analyze the last term on the right, we note that $\ker L = \ker L^+$, so in particular the constant vector of $1$'s is in $\ker L^+$. This means that the row sums of $L^+$ are zero, and hence that the ``grand sum'' $\sum_{i,j} (L^+)_{ij}$ is zero as well. The statement on eigenvalues is simply that the (nonzero) eigenvalues of $L^+$ are the reciprocals of those of $L$. 
\end{proof}
We note that the connection between Kirchhoff index and the trace of $L^+$ is well-known~\cite{Gutman:1996hq}, though it is not always stated clearly whether the result is intended to apply to multigraphs $\graphG$. 

We now give two examples for Theorem~\ref{thm:expected radius of gyration}. First, consider the path graph with $\verticesV$ vertices $\vertex_1,\dotsc,\vertex_{\verticesV}$. The resistance between vertex $\vertex_i$ and $\vertex_j$ is simply $\abs{i-j}$. Summing, we get
\begin{equation}\label{eq:path gyradius}
\mathcal{E}(R_g^2;\text{path graph}) = \frac{d}{6} \frac{\verticesV(\verticesV+1)}{\verticesV-1}
\end{equation}
which agrees with the standard asymptotic expression of $\frac{d \verticesV}{6}$.

The Gaussian ring polymer is based on the cycle graph with $\verticesV$ vertices. The graph Laplacian for this graph can be written
\begin{equation*}
L = 
\begin{pmatrix}
 2 & -1 &  0 &  0 & 0 & \dots & 1 \\
-1 &  2 & -1 &  0 & 0 & \dots & 0 \\
 0 & -1 &  2 & -1 & 0 & \dots & 0 \\
 \vdots & \vdots & \vdots & \vdots & \vdots \\
-1 &  0 &  0 & 0 & 0 & \dots & 2
\end{pmatrix}
\end{equation*}
which is a circulant matrix (every row is a cyclic shift of the previous one). The eigenvalues of a circulant matrix are easy to compute; in this case they are 
\begin{equation*}
\lambda_j = 2 - 2 \cos \left( \frac{2\pi j}{\verticesV} \right) = 4 \sin^2 \left(  \frac{\pi j}{\verticesV} \right)
\end{equation*}
for $j \in 0, \dots, \verticesV-1$. Only $\lambda_0 = 0$, so we have to sum $\sum_{j=1}^{\verticesV-1} \csc^2 (\frac{\pi j}{\verticesV} ) = \frac{1}{3}(\verticesV^2 - 1)$~\cite[4.4.6.5, p.\ 644]{Prudnikov:1986vp}. Zhang et al.~\cite{Zhang:2006bm} compute this along with formulae for the Kirchhoff index of other circulant graphs. This means that the expected radius of gyration is (exactly)
\begin{equation*}
\mathcal{E}(R_g^2;\text{cycle graph}) = \frac{d}{12} \frac{\verticesV^2 - 1}{\verticesV},
\end{equation*}
which agrees with both the standard asymptotic approximation $\frac{d \verticesV}{12}$ \cite{Casassa:1965hu,Kramers:1946ki} and the result of summing~\eqref{eq:expected squared distance in cycle graph} over all pairs of vertices in the cycle and dividing by $\verticesV^2$.

For more general graphs, there are a number of useful ways to approach the numerical computation of the Kirchhoff index. Finding the eigenvalues of $L$ is certainly the most straightforward. The fastest one in practice seems to be to calculate in terms of the last two coefficients of the characteristic polynomial of the graph Laplacian $L$, which is a sparse matrix.

\begin{proposition}{\cite{Mohar:1993gh}}
If $p(t) = \det(L - tI) = t^\verticesV + c_1 t^{\verticesV-1} + \dots + c_{\verticesV-1} t + c_{\verticesV}$ is the characteristic polynomial of the graph Laplacian $L(\graphG)$, then $\mathcal{E}(R_g^2;\graphG) = -\frac{1}{\verticesV}\frac{c_{\verticesV-2}}{c_{\verticesV-1}}.$
\label{prop:characteristic polynomial formula}
\end{proposition}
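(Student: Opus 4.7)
The plan is to combine the eigenvalue formula from Theorem~\ref{thm:expected radius of gyration} with Vieta's formulas applied to the characteristic polynomial, using the fact that connectivity of $\graphG$ forces exactly one zero eigenvalue of $L$.

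First, I would rewrite the target quantity using the previous theorem as
\begin{equation*}
\mathcal{E}(R_g^2;\graphG) = \frac{d}{\verticesV} \tr L^+ = \frac{d}{\verticesV} \sum_{i=1}^{\verticesV-1} \frac{1}{\lambda_i},
\end{equation*}
where $\lambda_1,\ldots,\lambda_{\verticesV-1}$ are the nonzero eigenvalues of $L$. Since $\graphG$ is connected, $\ker L$ is one-dimensional, so $\lambda_0 = 0$ is a simple eigenvalue and all other $\lambda_i$ are strictly positive. This lets me factor $p(t)$ as a polynomial with $t$ as a simple root: if $p(t) = \prod_{i=0}^{\verticesV-1}(\lambda_i - t)$ (matching the $\det(L-tI)$ convention up to the sign that makes $p$ monic in the stated form), then $p(t) = -t \, q(t)$ where $q(t) = \prod_{i=1}^{\verticesV-1}(\lambda_i - t)$.

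Next I would read off the two coefficients of interest from $p(t) = -t\, q(t)$. The coefficient $c_{\verticesV-1}$ is the coefficient of $t^1$ in $p$, which equals $-q(0) = -\prod_{i=1}^{\verticesV-1} \lambda_i$. The coefficient $c_{\verticesV-2}$ is the coefficient of $t^2$ in $p$, which is $-1$ times the coefficient of $t^1$ in $q$. That coefficient of $t$ in $q(t) = \prod_{i=1}^{\verticesV-1}(\lambda_i - t)$ is the elementary symmetric function
\begin{equation*}
-\sum_{k=1}^{\verticesV-1} \prod_{i \neq k} \lambda_i = -\Bigl(\prod_{i=1}^{\verticesV-1}\lambda_i\Bigr) \sum_{k=1}^{\verticesV-1} \frac{1}{\lambda_k},
\end{equation*}
where the factorization is valid precisely because every $\lambda_i$ with $i \geq 1$ is nonzero. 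Hence $c_{\verticesV-2} = \bigl(\prod_{i=1}^{\verticesV-1}\lambda_i\bigr) \sum_{k=1}^{\verticesV-1} 1/\lambda_k$.

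Finally, dividing gives
\begin{equation*}
\frac{c_{\verticesV-2}}{c_{\verticesV-1}} = -\sum_{i=1}^{\verticesV-1} \frac{1}{\lambda_i},
\end{equation*}
the product $\prod \lambda_i$ cancelling cleanly. Substituting into the formula for $\mathcal{E}(R_g^2;\graphG)$ from the first step yields the stated identity. The main obstacle is purely bookkeeping: one has to track the sign convention for $p(t) = \det(L - tI)$ consistently with the stated monic normalization, and verify that the one vanishing eigenvalue reduces the elementary symmetric sums defining $c_{\verticesV-1}$ and $c_{\verticesV-2}$ to single surviving terms in each case. No deeper idea is needed beyond Vieta and connectivity.
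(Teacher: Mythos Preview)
Your argument is correct and is the standard derivation via Vieta's formulas. Note, however, that the paper does not actually supply its own proof of this proposition: it is stated with a citation and left unproved, so there is no in-paper argument to compare against. Your route through Theorem~\ref{thm:expected radius of gyration} together with the factorization $p(t) = -t\,q(t)$ forced by the single zero eigenvalue is exactly how one verifies such an identity, and your remark that the ratio $c_{\verticesV-2}/c_{\verticesV-1}$ is insensitive to the overall sign normalization of $p$ disposes of the monic-versus-$\det(L-tI)$ bookkeeping cleanly.

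One small observation: your opening line correctly carries the dimension factor $d$ from Theorem~\ref{thm:expected radius of gyration}, so the final substitution literally gives $\mathcal{E}(R_g^2;\graphG) = -\tfrac{d}{\verticesV}\,c_{\verticesV-2}/c_{\verticesV-1}$ rather than $-\tfrac{1}{\verticesV}\,c_{\verticesV-2}/c_{\verticesV-1}$. This discrepancy lies in the proposition's statement (which drops the $d$), not in your reasoning.
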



We note that Proposition~\ref{prop:characteristic polynomial formula} has the perhaps surprising corollary:
\begin{corollary}
The expected radius of gyration $\mathcal{E}(R_g^2)$ of any Gaussian TCRW is a rational number.
\end{corollary}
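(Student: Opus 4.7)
The plan is to deduce the corollary directly from Proposition~\ref{prop:characteristic polynomial formula}, which expresses $\mathcal{E}(R_g^2;\graphG)$ as $-\frac{1}{\verticesV}\frac{c_{\verticesV-2}}{c_{\verticesV-1}}$ where the $c_k$ are coefficients of the characteristic polynomial of the graph Laplacian $L(\graphG)$. The key observation is that $L(\graphG)$ is an integer matrix for any multigraph: by the explicit formula given earlier, every diagonal entry is a (possibly loop-adjusted) nonnegative integer vertex degree, and every off-diagonal entry is the negative of an edge-multiplicity between two vertices.

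Given this, I would point out that $p(t) = \det(L - tI)$ is the determinant of a matrix whose entries are integer-valued polynomials in $t$, and so $p(t) \in \Z[t]$. In particular, the coefficients $c_{\verticesV-2}$ and $c_{\verticesV-1}$ are integers. The number of vertices $\verticesV$ is also a positive integer, so the ratio $-\frac{1}{\verticesV}\frac{c_{\verticesV-2}}{c_{\verticesV-1}}$ is rational as soon as we know the denominator is nonzero. If a dimension factor $d$ needs to be carried (as in Theorem~\ref{thm:expected radius of gyration}), it is a positive integer and does not affect rationality.

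To verify $c_{\verticesV-1} \neq 0$, I would invoke the fact (used earlier in defining the potential functions) that for a connected multigraph the kernel of $L$ is one-dimensional, so $0$ is a simple eigenvalue and the remaining $\verticesV - 1$ eigenvalues $\lambda_1, \dotsc, \lambda_{\verticesV-1}$ are strictly positive. Writing $p(t) = (-1)^{\verticesV}\, t \prod_{i=1}^{\verticesV-1}(t - \lambda_i)$ (after adjusting for the sign convention in the statement of Proposition~\ref{prop:characteristic polynomial formula}), the coefficient of $t^1$ is a nonzero scalar multiple of $\prod_i \lambda_i$, which is positive. Hence $c_{\verticesV-1} \neq 0$, and the quotient of integers $-\frac{c_{\verticesV-2}}{\verticesV\, c_{\verticesV-1}}$ is a well-defined rational number.

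There is essentially no obstacle here: the entire argument reduces to the observation that the characteristic polynomial of an integer matrix lies in $\Z[t]$, together with the simplicity of the zero eigenvalue of the Laplacian of a connected graph. The only thing to be slightly careful about is the sign convention in Proposition~\ref{prop:characteristic polynomial formula} and making sure a potential hidden factor of $d$ does not introduce irrationality, both of which are routine.
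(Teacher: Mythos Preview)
Your proposal is correct and follows exactly the approach the paper intends: the corollary is stated immediately after Proposition~\ref{prop:characteristic polynomial formula} with no separate proof, so the paper is relying on precisely the observation that $L(\graphG)$ is an integer matrix and hence its characteristic polynomial lies in $\Z[t]$. Your additional care in checking $c_{\verticesV-1}\neq 0$ via the simplicity of the zero eigenvalue, and in noting that the integer dimension factor $d$ cannot spoil rationality, fills in details the paper leaves implicit.
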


\section{Subdivision graphs and topological polymers}

The topological polymers so far synthesized in the laboratory have long chains of monomers joining a relatively small number of ``junction'' molecules. Thus, while the number of edges in their graph may be large, the relative complexity of the graphs is rather modest. Further, since each edge represents a persistence length of the chain of monomers, it is not always clear exactly how many edges should be in each path between junctions. 

Therefore, we now study the behavior of the expected radius of gyration for a particular kind of subdivision graph as the number of subdivisions increases. Two cautions are in order: the usual definition of a subdivision graph places one vertex in the middle of each edge, so iterated subdivisions divide the original edges into a power of two subdivisions~\cite{Yang:2014im}. This is different from our model, where any number of subdivisions are allowed. Second, while previous authors subdivided simple graphs, we explicitly allow loop edges and multiple edges.
\begin{definition}
The $n$-part edge subdivision $\graphG_n$ of a multigraph $\graphG$ is the graph obtained by dividing each edge of $\graphG$ into $n$ pieces. 
\end{definition}
Figure~\ref{fig:subdivision} shows an example of the $5$-fold edge subdivision of a multigraph.
\begin{figure}
\hfill
\begin{overpic}[width=2.5in]{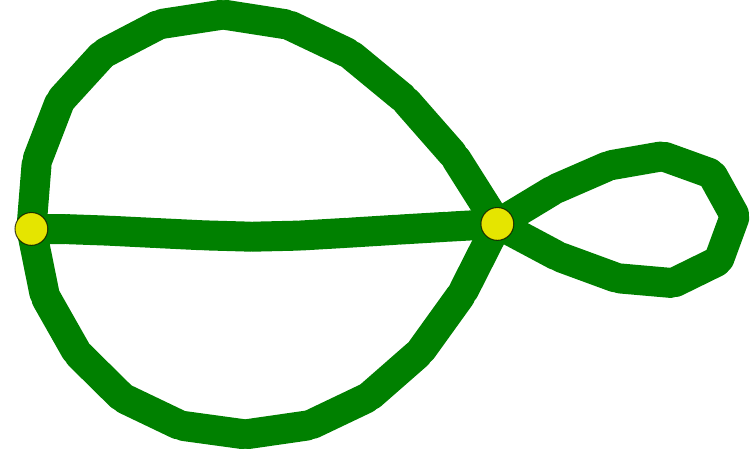}
\put(30,-5){$\graphG$}
\end{overpic}
\hfill
\begin{overpic}[width=2.5in]{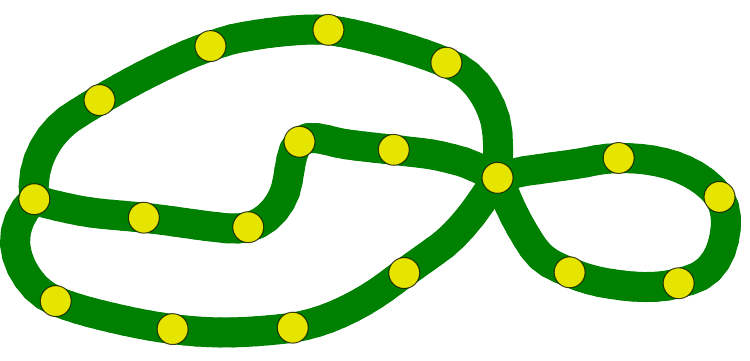}
\put(30,-5){$\graphG_5$}
\end{overpic}
\hfill
\hphantom{.}
\caption{The $n$-part edge subdivision divides each edge of $\graphG$ into $n$ pieces. On the left, we see a multigraph and on the right its $5$-part edge subdivision.}
\label{fig:subdivision}
\end{figure}

We have already analyzed the expected radius of gyration of two $n$-part edge subdivisions of a graph-- if $\graphG$ has two vertices and one edge joining them, the path graph with $n$ edges is $\graphG_n$ and 
\begin{equation}
\mathcal{E}\left( R_g^2 ; \graphG_n  = \text{ path graph}\right) = \frac{d}{6} \frac{(n+1)(n+2)}{n}.
\label{eq:path graph subdivision}
\end{equation}
If $\graphG$ has one vertex and one loop edge, the cycle graph with $n$ edges is $\graphG_n$. We saw that
\begin{equation}
\mathcal{E}\left( R_g^2 ; \graphG_n = \text{ cycle graph}\right) = \frac{d}{12} \frac{n^2 - 1}{n}.
\label{eq:cycle graph subdivision}
\end{equation}
Notice that both of these are asymptotically linear in $n$. We will show that this is a general feature of edge-subdivision graphs, and show how to compute the leading coefficient. 

To do so, we need to recall a definition:
\begin{definition}
The \emph{normalized} graph Laplacian $\mathcal{L}(\graphG)$ is given by
\begin{equation*}
	\mathcal{L}_{ij} = \begin{cases}
					1 - \frac{2\times\text{\# loop edges}}{\deg(\vertex_i)}, & \text{ if } i = j, \\
					-\frac{k}{\sqrt{\deg(\vertex_i) \deg(\vertex_j)}}, &\text{ if } \vertex_i, \vertex_j \text{ joined by $k$ edges},\\
					0, & \text{ otherwise.}
				\end{cases}
\end{equation*}	
\end{definition}

We can now state our next main result on the asymptotics of radius of gyration:
\begin{theorem}
For any connected multigraph $\graphG$ (including loop and multiple edges), if $\graphG_n$ is the $n$-part edge subdivision of $\graphG$, then the expected radius of gyration of a Gaussian TCRW with underlying graph $\graphG_n$ embedded in $\R^d$ obeys
\begin{equation*}
\lim_{n \rightarrow \infty} \frac{1}{n} \mathcal{E}(R_g^2; \graphG_n) = 
\frac{d}{2 \edgesE(\graphG)} \left( \operatorname{Tr} \mathcal{L}^+(\graphG) + \frac{1}{3} \operatorname{Loops}(\graphG) - \frac{1}{6} \right).
\end{equation*}
Here $\operatorname{Loops}(\graphG) = \edgesE(\graphG) - \verticesV(\graphG) + 1$ is the cycle rank of $\graphG$.
\label{thm:main}
\end{theorem}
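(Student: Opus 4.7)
The plan is to apply Theorem~\ref{thm:expected radius of gyration} and write $\mathcal{E}(R_g^2; \graphG_n) = \frac{d}{\verticesV(\graphG_n)^2}\Kirchhoff(\graphG_n)$. Since $\verticesV(\graphG_n) = \verticesV + (n{-}1)\edgesE \sim n\edgesE$, it suffices to extract the leading $n^3$ coefficient of $\Kirchhoff(\graphG_n) = \sum_{a<b} r^{\graphG_n}(a,b)$. I would decompose this sum by vertex type (original/original, original/interior, interior/interior) and observe that only the interior/interior pairs are $\Theta(n^3)$; the others are $O(n^2)$ and drop out of the limit.

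The central technical ingredient is a closed form for $r^{\graphG_n}(w_{\alpha,k}, w_{\beta,\ell})$. I would pass to a continuum \emph{metric graph} $\graphG^{\mathrm{met}}$ in which each edge of $\graphG$ is a unit-length resistor; $w_{\alpha,k}$ sits at parameter $t=k/n$ on $e_\alpha$, and series scaling gives $r^{\graphG_n} = n\cdot r^{\graphG^{\mathrm{met}}}$. For a unit source at $p=(e_\alpha, t_\alpha)$ and sink at $q=(e_\beta, t_\beta)$, the potential $\phi$ is linear on each edge except at $p$ and $q$ (where the interval Green's function $G(s,t) = \min(s,t) - st$ adds a corner); a Kirchhoff balance at each graph vertex then shows the vertex restriction $\phi_V$ satisfies $L(\graphG)\phi_V = b$ with $b = (1{-}t_\alpha)e_{u_\alpha} + t_\alpha e_{v_\alpha} - (1{-}t_\beta)e_{u_\beta} - t_\beta e_{v_\beta}$. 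Evaluating $\phi(p)-\phi(q)$ yields, for $\alpha \neq \beta$,
\begin{equation*}
r^{\graphG^{\mathrm{met}}}(p,q) = b^T L^+(\graphG)\,b + t_\alpha(1{-}t_\alpha) + t_\beta(1{-}t_\beta),
\end{equation*}
while the same-edge case reduces to a direct series/parallel computation giving $(t_\beta{-}t_\alpha) - (t_\beta{-}t_\alpha)^2\bigl(1 - r^\graphG(u_\alpha, v_\alpha)\bigr)$.

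Riemann-summing the interior/interior contribution (with error $O(n^2)$, hence $o(n^3)$) produces a double integral over $[0,1]^2$. Using the midpoint $m_\alpha = \tfrac12(e_{u_\alpha}+e_{v_\alpha})$ and the identity $\int_0^1 a_\alpha(t)a_\alpha(t)^T\,dt = m_\alpha m_\alpha^T + \tfrac{1}{12}b_\alpha b_\alpha^T$ (with $b_\alpha$ the $\alpha$-th column of $B$ and $a_\alpha(t) = m_\alpha + (t{-}\tfrac12)b_\alpha$), the quadratic form in $L^+$ reduces to combinations of $\tr(DL^+)$ and $d^T L^+ d$, via the matrix identity $\sum_\alpha m_\alpha m_\alpha^T = \tfrac14(A+D) = \tfrac12 D - \tfrac14 L$. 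The final collapse to $\tfrac{\edgesE}{2}\bigl(\tr\mathcal{L}^+ + \tfrac13\Loops - \tfrac16\bigr)$ invokes Foster's theorem $\sum_\alpha r^\graphG(u_\alpha,v_\alpha) = \verticesV{-}1$ and the Chen--Zhang identity $\DegreeKirchhoff(\graphG) = 2\edgesE\tr(DL^+) - d^T L^+ d = 2\edgesE\tr\mathcal{L}^+(\graphG)$; a short algebraic check confirms that the remaining polynomial in $(\verticesV, \edgesE)$ collapses exactly to $\tfrac{\edgesE\Loops}{6} - \tfrac{\edgesE}{12}$. The main obstacle I expect is the potential-theoretic derivation in the second step, and in particular keeping track of loop edges --- where $u_\alpha = v_\alpha$, the formulas must degenerate gracefully with $r^\graphG(u_\alpha,v_\alpha)=0$, and the identity $\sum_\alpha m_\alpha m_\alpha^T = \tfrac14(A+D)$ works only because $A$ and $D$ both weight loops by $2$ while $m_\alpha = e_{v_i}$ for a loop at $v_i$.
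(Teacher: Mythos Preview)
Your proposal is correct and follows the same skeleton as the paper: reduce to $\lim_{n\to\infty} n^{-3}\Kirchhoff(\graphG_n)$ via Theorem~\ref{thm:expected radius of gyration}, split the Kirchhoff sum by boundary/interior vertex type, derive a formula expressing $r_{qr}^{\graphG_n}$ for interior $q,r$ in terms of resistances among the edge endpoints in $\graphG$, then collapse the result using Foster's theorem and the Chen--Zhang identity $\DegreeKirchhoff(\graphG) = 2\edgesE\,\tr\mathcal{L}^+$.

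The execution differs in two places worth noting. First, you pass immediately to the continuum metric graph and replace the discrete sums by Riemann integrals over $[0,1]^2$; the paper instead stays discrete, computes exact sums over $j,k \in \{1,\dots,n-1\}$ via the explicit piecewise-linear Poisson solutions of Propositions~\ref{prop:pajb}--\ref{prop:resistance formula}, and only then takes $n\to\infty$. Your interval-Green's-function derivation and the paper's $p^{ajb}$ construction are two phrasings of the same potential-theoretic content. Second, and more substantively, for the summation over edge pairs you use the matrix identity $\sum_\alpha m_\alpha m_\alpha^T = \tfrac12 D - \tfrac14 L$ to reduce $\sum_{\alpha,\beta} b^T L^+ b$ directly to $\tr(DL^+)$ and $d^T L^+ d$, whereas the paper argues combinatorially---counting how many times each $r^\graphG_{\vertexq\vertexr}$ appears weighted by endpoint degrees---to recognize the degree-Kirchhoff index before invoking Chen--Zhang. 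Your linear-algebraic reduction is arguably cleaner; the paper's counting argument is more explicit about where loop and multiple edges enter. Your caution about loops is well-placed and your check that $m_\alpha m_\alpha^T = e_{v_i}e_{v_i}^T$ matches the $+2$ contributions to both $A_{ii}$ and $D_{ii}$ is exactly the verification needed.
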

For the graph $\graphG$ with two vertices and one edge, $\mathcal{L}$ is the $2\times 2$ matrix $\left( \begin{smallmatrix} 1 & -1 \\ -1 & 1 \end{smallmatrix} \right)$. A computation reveals that $\mathcal{L}^+ = \frac{1}{4} \mathcal{L}$ and $\tr \mathcal{L}^+ = \frac{1}{2}$. Thus Theorem~\ref{thm:main} recovers the coefficient $\frac{d}{6}$ from~\eqref{eq:path graph subdivision}. Similarly, for the graph $\graphG$ with one vertex and one loop edge, $\mathcal{L}$ is the $1 \times 1$ matrix whose single entry is zero, so $\tr \mathcal{L}^+ = 0$. Thus Theorem~\ref{thm:main} also recovers the coefficient $\frac{d}{12}$ from~\eqref{eq:cycle graph subdivision}. 

We will now develop several tools leading up to the proof of Theorem~\ref{thm:main}. Our basic idea is to calculate resistance distances on the subdivided graph $\graphG_n$ by constructing solutions of the Poisson problem~\eqref{eq:poisson problem 1} using superposition of solutions. In this way, we'll be able to relate resistances on $\graphG_n$ to resistances on $\graphG$, and so compute the Kirchhoff index of $\graphG_n$ by summing these formulae. This approach was inspired by the Green's kernel method of Carmona, Mitjana, and Mons\'o~\cite{Carmona:2017ge}.

\subsection{Poisson problems on subdivided graphs}

For this section, we establish the notation that we will refer to vertices along a subdivided edge in $\graphG_n$ as $\vertex_0, \dotsc, \vertex_n$. The end vertices $\vertex_0$ and $\vertex_n$ correspond to vertices in $\graphG$ which we will refer to as $\vertexa$ and $\vertexb$. It is possible that $\vertexa = \vertexb$ if the original edge in $\graphG$ was a loop edge. 

We first note that functions on subdivided edges are linear when their Laplacian vanishes.
\begin{lemma}
If a vertex $\vertex_i$ is joined to two other vertices $\vertex_{i-1}$ and $\vertex_{i+1}$ then $Lp(\vertex_i) = 0$ if and only if $p$ is a linear function of $i$ on $\vertex_{i-1}, \vertex_i, \vertex_{i+1}$.
\label{lem:harmonic implies linear}
\end{lemma}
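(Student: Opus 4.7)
The plan is to apply the local Laplacian formula \eqref{eq:local laplacian} directly to $\vertex_i$. Since the hypothesis says $\vertex_i$ is joined to exactly two other vertices, $\vertex_{i-1}$ and $\vertex_{i+1}$, there are no loop edges at $\vertex_i$ and its degree is exactly $2$. Substituting into \eqref{eq:local laplacian} gives
\begin{equation*}
Lp(\vertex_i) = 2 p(\vertex_i) - p(\vertex_{i-1}) - p(\vertex_{i+1}).
\end{equation*}
Setting this equal to zero and rearranging yields $p(\vertex_i) = \tfrac12\bigl(p(\vertex_{i-1}) + p(\vertex_{i+1})\bigr)$.

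The second step is to observe that this midpoint condition is exactly equivalent to linearity in $i$ on the three consecutive indices $i-1,i,i+1$. Indeed, three colinear sample values $y_{i-1},y_i,y_{i+1}$ on equally spaced inputs satisfy $y_i = \tfrac12(y_{i-1}+y_{i+1})$, and conversely, given the two endpoint values $p(\vertex_{i-1})$ and $p(\vertex_{i+1})$, the affine function of $i$ that interpolates them at $i-1$ and $i+1$ takes the value $\tfrac12(p(\vertex_{i-1})+p(\vertex_{i+1}))$ at $i$, which forces agreement with $p(\vertex_i)$.

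There is no real obstacle here; the only subtlety worth flagging is the loop-edge convention. One might worry that if $\vertex_{i-1}=\vertex_{i+1}$ or if $\vertex_i$ had loop contributions then the bookkeeping in \eqref{eq:local laplacian} would need to be redone. But the hypothesis explicitly says $\vertex_i$ is joined to \emph{two other} vertices, which rules out loop edges at $\vertex_i$ and ensures the degree counts exactly $2$ incident edges, so the formula above is clean and the equivalence is immediate.
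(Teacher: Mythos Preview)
Your proof is correct and follows essentially the same approach as the paper: both apply the local formula~\eqref{eq:local laplacian} at the degree-two vertex $\vertex_i$ to obtain $Lp(\vertex_i) = 2p(\vertex_i) - p(\vertex_{i-1}) - p(\vertex_{i+1})$ and then observe that vanishing of this expression is equivalent to linearity on the three consecutive indices. The only cosmetic difference is that the paper rearranges to the equal-difference form $p(\vertex_i) - p(\vertex_{i-1}) = p(\vertex_{i+1}) - p(\vertex_i)$ while you state the equivalent midpoint condition.
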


\begin{proof} 
Using~\eqref{eq:local laplacian}, the Laplacian at $\vertex_i$ is given by
\begin{equation*}
L p (\vertex_i) = -p(\vertex_{i-1}) + 2 p(\vertex_i) - p(\vertex_{i+1}).
\end{equation*}
Rearranging the right hand side, we see $L p (\vertex_i) = 0$ if and only if $p(\vertex_i) - p(\vertex_{i-1}) = p(\vertex_{i+1}) - p(\vertex_i)$, which is exactly the condition for $p$ to be a linear function on these points.
\end{proof}

We can now define and solve a particular Poisson problem on a subdivided edge
\begin{proposition}
Suppose $\vertexa \rightarrow \vertexb$ is an edge of a connected multigraph $\graphG$ subdivided in $\graphG_n$ by $\vertex_0, \dotsc, \vertex_n$. The Poisson problem  
\begin{equation}
L p^{\vertexa j \vertexb}(\vertexw) = 
\begin{cases}
	-\frac{n-j}{n}, & \text{if $\vertexw = \vertexa \neq \vertexb$} \\
	-\frac{j}{n}, & \text{if $\vertexw = \vertexb \neq \vertexa$} \\
	-1, & \text{if $\vertexw = \vertexa = \vertexb$}\\
	+1, & \text{if $\vertexw = \vertex_j$}\\
	0, & otherwise
\end{cases}
\label{eq:poisson problem 2}
\end{equation}
has minimum norm solution 
\begin{equation}
p^{\vertexa j \vertexb}(\vertexw) = C + 
\begin{cases}
  \frac{n-j}{n} k, & \text{if $\vertexw = \vertex_k$ and $k \leq j$} \\
  -\frac{j}{n} k + j, & \text{if $\vertexw = \vertex_k$ and $k \geq j$} \\
  0, & \text{otherwise, including $\vertexw = \vertexa$ and $\vertexw = \vertexb$} 
\end{cases}
\label{eq:pajb solution}
\end{equation}
for some $C$. Since $L p^{ajb}$ is in $\im L$, $L^+ L p^{ajb} = p^{ajb}$.
\label{prop:pajb}
\end{proposition}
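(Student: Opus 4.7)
The plan is to verify the proposed formula for $p^{\vertexa j \vertexb}$ directly by computing $L p^{\vertexa j \vertexb}$ at every vertex of $\graphG_n$, then pin down the additive constant $C$ using the minimum-norm requirement. Before that, I would check that the right-hand side of \eqref{eq:poisson problem 2} lies in $\im L = (\ker L)^\perp$ so an exact solution exists: when $\vertexa \neq \vertexb$ the values $-\tfrac{n-j}{n}$, $-\tfrac{j}{n}$, $+1$ sum to zero, and when $\vertexa = \vertexb$ the values $-1$, $+1$ also sum to zero.

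The vertex set of $\graphG_n$ splits into four classes. First, for an interior subdivision vertex $\vertex_k$ on our edge with $1 \leq k \leq n-1$ and $k \neq j$, the formula \eqref{eq:pajb solution} is affine in $k$ on both the range $k \leq j$ and the range $k \geq j$, so \lem{harmonic implies linear} gives $L p^{\vertexa j \vertexb}(\vertex_k) = 0$. Second, at $\vertex_j$ itself, a direct application of \eqref{eq:local laplacian} to the values at $\vertex_{j-1}$ and $\vertex_{j+1}$ yields $+1$; as a consistency check, the two branches of \eqref{eq:pajb solution} agree at $\vertex_j$, both giving $C + j(n-j)/n$. Third, at the endpoint $\vertexa$, observe that $\deg_{\graphG_n}(\vertexa) = \deg_\graphG(\vertexa)$ (a subdivided loop at $\vertexa$ still contributes $2$ to its degree), and every neighbor of $\vertexa$ \emph{not} on our subdivided edge takes the value $C$ under \eqref{eq:pajb solution}; when $\vertexa \neq \vertexb$ only $\vertex_1$ is a new neighbor, so the Laplacian collapses to $C - p^{\vertexa j \vertexb}(\vertex_1) = -\tfrac{n-j}{n}$, and symmetrically $-\tfrac{j}{n}$ at $\vertexb$; when $\vertexa = \vertexb$ both $\vertex_1$ and $\vertex_{n-1}$ are neighbors, yielding $2C - p^{\vertexa j \vertexb}(\vertex_1) - p^{\vertexa j \vertexb}(\vertex_{n-1}) = -1$. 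Finally, at any other vertex of $\graphG_n$, both the function and all of its neighbors take value $C$, so the Laplacian vanishes.

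Since adding a constant preserves any solution to $L p = b$, the minimum-norm solution is obtained by fixing $C$ via $\sum_\vertexw p^{\vertexa j \vertexb}(\vertexw) = 0$, which places $p^{\vertexa j \vertexb}$ in $(\ker L)^\perp$. The identity $L^+ L p^{\vertexa j \vertexb} = p^{\vertexa j \vertexb}$ then follows because $L^+ L$ is orthogonal projection onto $(\ker L)^\perp$, as recorded in the discussion after \eqref{eq:superposition of solutions}. The main obstacle here is not any single computation but the bookkeeping: keeping the loop-edge case $\vertexa = \vertexb$ uniform with the generic case, and making sure that the piecewise formula at $\vertex_j$ and the cancellations at $\vertexa$ and $\vertexb$ are consistent regardless of how many other edges (loops included) are incident to those vertices.
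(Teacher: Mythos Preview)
Your proposal is correct and follows essentially the same approach as the paper: verify piecewise linearity away from the corner vertices via \lem{harmonic implies linear}, check the Laplacian directly at $\vertexa$, $\vertexb$, and $\vertex_j$ using~\eqref{eq:local laplacian}, and then fix $C$ by the minimum-norm condition. The paper's proof is terser, simply asserting that the direct checks go through, while you have carried them out explicitly (including the loop-edge case $\vertexa = \vertexb$) and added the preliminary observation that the right-hand side sums to zero; none of this is a different method, only a fuller execution of the same one.
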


\begin{proof}
The function $p^{\vertexa j \vertexb}$ is piecewise linear with corners at $\vertexa$, $\vertexb$ and $\vertex_j$. This proves that $L p^{\vertexa j \vertexb} = 0$ away from these three vertices by Lemma~\ref{lem:harmonic implies linear}. The Laplacian may be checked directly at those points using~\eqref{eq:local laplacian}. Since all solutions of Poisson problems on a connected graph $\graphG$ differ by a constant there must be some $C$ which yields the minimum norm solution. 
\end{proof}

Recalling that $p^{ij}$ is our notation for a solution to~\eqref{eq:poisson problem 1}, we can now prove a general decomposition formula for solutions of~\eqref{eq:poisson problem 1} on a subdivided graph $\graphG_n$.

\begin{proposition}
Suppose that $\vertexa \rightarrow \vertexb$ and $\vertexc \rightarrow \vertexd$ are any edges in $\graphG$ (they may be loop edges, connect the same endpoints, or even be the same edge). Suppose that $\vertexq$ is $j$ edges from $\vertexa$ along the subdivision of $\vertexa \rightarrow \vertexb$ in $\graphG_n$ and $\vertexr$ is $k$ edges from $\vertexc$ along the subdivision of $\vertexc \rightarrow \vertexd$ in $\graphG_n$. 

Then as functions on $\graphG_n$, we have the following relation between minimum norm least-squares solutions to the three Poisson problems \eqref{eq:poisson problem 1},~\eqref{eq:poisson problem 1a} and~\eqref{eq:poisson problem 2}:
\begin{equation*}
p^{\vertexq \vertexr} = p^{\vertexa j \vertexb} 
+ \frac{n-j}{n} p^{\vertexa} 
+ \frac{j}{n} p^{\vertexb} 
- \frac{n - k}{n} p^{\vertexc}
- \frac{k}{n} p^{\vertexd}
- p^{\vertexc k \vertexd}
\end{equation*}
where we use the convention $p^{\vertexa \vertexa} = 0$ and note that $p^{\vertexa 0 \vertexb} = p^{\vertexa n \vertexb} = 0$ in case there are coincidences among $\vertexa, \vertexb, \vertexc, \vertexd, \vertexq$ and $\vertexr$.  
\label{prop:decomposition of solutions to poisson problems}
\end{proposition}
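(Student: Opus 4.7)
The plan is to exploit the superposition principle \eqref{eq:superposition of solutions} for minimum-norm least-squares solutions to Poisson problems on $\graphG_n$. Let $\Phi$ denote the right-hand side of the proposed identity. I will verify directly that $L\Phi = e_\vertexq - e_\vertexr$ and that $\Phi \in (\ker L)^\perp$, whence
\begin{equation*}
\Phi = L^+ L \Phi = L^+(e_\vertexq - e_\vertexr) = p^{\vertexq \vertexr},
\end{equation*}
using that $L^+L$ is orthogonal projection onto $(\ker L)^\perp$.

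For the first step I would sum the six Laplacian contributions. By \eqref{eq:poisson problem 1a}, the pair $\frac{n-j}{n}Lp^\vertexa + \frac{j}{n}Lp^\vertexb$ contributes $\frac{n-j}{n}e_\vertexa + \frac{j}{n}e_\vertexb - \frac{1}{\verticesV}\mathbf{1}$, using $\frac{n-j}{n} + \frac{j}{n} = 1$ to combine the $-\frac{1}{\verticesV}\mathbf{1}$ pieces. The analogous computation on the $(\vertexc,\vertexd)$ side yields $-\frac{n-k}{n}e_\vertexc - \frac{k}{n}e_\vertexd + \frac{1}{\verticesV}\mathbf{1}$, so the $\mathbf{1}$ terms cancel. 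Finally \eqref{eq:poisson problem 2} gives $Lp^{\vertexa j \vertexb} = e_\vertexq - \frac{n-j}{n}e_\vertexa - \frac{j}{n}e_\vertexb$ (with $\vertex_j = \vertexq$) and $-Lp^{\vertexc k \vertexd} = -e_\vertexr + \frac{n-k}{n}e_\vertexc + \frac{k}{n}e_\vertexd$; the $e_\vertexa,e_\vertexb,e_\vertexc,e_\vertexd$ terms cancel pairwise and only $e_\vertexq - e_\vertexr$ survives. For the second step, each summand of $\Phi$ is by construction a minimum-norm solution, hence lies in $(\ker L)^\perp$, and linearity puts $\Phi$ there as well.

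The main obstacle is the bookkeeping of the degenerate cases absorbed into the statement. When $\vertexa = \vertexb$, \eqref{eq:poisson problem 2} collapses to its loop form with $-1$ at $\vertexa$, but the identity $\frac{n-j}{n}e_\vertexa + \frac{j}{n}e_\vertexb = e_\vertexa$ keeps the cancellation above intact. When $j \in \{0,n\}$ the convention $p^{\vertexa 0 \vertexb} = p^{\vertexa n \vertexb} = 0$ collapses $\Phi$ to the expression one would write with $\vertexq$ replaced by the appropriate endpoint, and the same input recovers the Laplacian correctly; the diagonal $\vertexq = \vertexr$ is absorbed by $p^{\vertexa\vertexa} = 0$, forcing $\Phi = 0$; and coincidences among $\vertexa,\vertexb,\vertexc,\vertexd$ or between the two edges are invisible because the Laplacian computation is additive. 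Each degenerate case reduces to the same cancellation pattern once the conventions are unpacked.
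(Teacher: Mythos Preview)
Your proof is correct and follows essentially the same strategy as the paper: compute the Laplacian of the right-hand side, match it to the Poisson data for $p^{\vertexq\vertexr}$, and invoke that all summands lie in $(\ker L)^\perp$ so that applying $L^+$ recovers the functions themselves. The paper organizes the computation by first proving the intermediate identity $p^{\vertexq} = p^{\vertexa j \vertexb} + \frac{n-j}{n}p^{\vertexa} + \frac{j}{n}p^{\vertexb}$ (and its analogue for $p^{\vertexr}$) and then subtracting, whereas you compute $L\Phi$ for the full six-term combination at once; this is only a cosmetic difference.
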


\begin{proof}
We first observe that 
\begin{equation*}
L p^{\vertexq} =  L p^{\vertexa j \vertexb} + \frac{n-j}{n} L p^{\vertexa} + \frac{j}{n} L p^{\vertexb}.
\end{equation*}
Both sides are equal to $-\frac{1}{\verticesV}$ away from $\vertexq$, $\vertexa$, and $\vertexb$. At $\vertexq$, $\vertexa$, and $\vertexb$ we can use the definitions of the functions to check the equality directly. Equality still holds if $\vertexa = \vertexb$, $\vertexq = \vertexa$ (that is, $j=0$) or $\vertexq = \vertexb$ (that is, $j=n$). Applying $L^+$ to both sides of the equation, we have shown 
\begin{equation*}
p^{\vertexq} =  p^{\vertexa j \vertexb} + \frac{n-j}{n} p^{\vertexa} + \frac{j}{n} p^{\vertexb}.
\end{equation*}
regardless of any possible coincidences between $\vertexa$, $\vertexb$ and $\vertexq$. 

Exactly the same argument shows 
\begin{equation*}
p^{\vertexr}=  p^{\vertexc k \vertexd} + \frac{n-k}{n} p^{\vertexc} + \frac{k}{n} p^{\vertexd}
\end{equation*}
regardless of coincidences among $\vertexc$, $\vertexd$ and $\vertexr$. 

Since $p^{\vertexq \vertexr} = p^{\vertexq} - p^{\vertexr}$ this proves the result. We note that this last equality holds regardless of any coincidences between $\{ \vertexa, \vertexb, \vertexq \}$ and $\{ \vertexc, \vertexd, \vertexr \}$.
\end{proof}

\subsection{Relating resistances between middle points and endpoints of subdivided edges}

We now use the decomposition formula of Proposition~\ref{prop:decomposition of solutions to poisson problems} to compute the resistance between vertices in the middle of subdivided edges in terms of resistances between their endpoints. 

\begin{proposition}
Suppose that $\vertexa \rightarrow \vertexb$ and $\vertexc \rightarrow \vertexd$ are any edges in $\graphG$ (they may be loop edges, connect the same endpoints, or even be the same edge). Suppose that $\vertexq$ is $j$ edges from $\vertexa$ along the subdivision of $\vertexa \rightarrow \vertexb$ in $\graphG_n$ and $\vertexr$ is $k$ edges from $\vertexc$ along the subdivision of $\vertexc \rightarrow \vertexd$ in $\graphG_n$. Further, let us define ``interpolation coefficients'' associated to $\vertexa, \vertexb, \vertexc$ and $\vertexd$ by
\begin{equation*}
\mu_\vertexa = \frac{n-j}{n}, \quad 
\mu_\vertexb = \frac{j}{n}, \quad
\mu_\vertexc = -\frac{n-k}{n}, \quad
\mu_\vertexd = -\frac{k}{n}.
\end{equation*}
Further, to simplify notation, let us write $\mathcal{S} = \{ \vertexa, \vertexb, \vertexc, \vertexd \}$. We may express the resistance $r_{\vertexq \vertexr}$ in terms of the resistances among $\vertexa, \vertexb, \vertexc$ and $\vertexd$ as follows. If $\vertexa \rightarrow \vertexb$ and $\vertexc \rightarrow \vertexd$ are different edges, 
\begin{equation} 
r_{\vertexq \vertexr} = \frac{j(n-j) + k(n-k)}{n} - \frac{1}{2} \sum_{\vertexv, \vertexw \in \mathcal{S}} \mu_\vertexv \mu_\vertexw \, r_{\vertexv \vertexw}
\label{eq:decomposed resistance}
\end{equation}
while if $\vertexa \rightarrow \vertexb = \vertexc \rightarrow \vertexd$, we modify the right-hand side of~\eqref{eq:decomposed resistance} by adding  
\begin{equation*}
-2 \frac{\min(j,k) (n - \max(j,k))}{n}.
\end{equation*}
\label{prop:resistance formula}
\end{proposition}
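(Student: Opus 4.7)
The plan is to compute $r_{\vertexq \vertexr} = p^{\vertexq \vertexr}(\vertexq) - p^{\vertexq \vertexr}(\vertexr)$ using the decomposition provided by Proposition~\ref{prop:decomposition of solutions to poisson problems}. Once the coefficients $(n-j)/n$, $j/n$, $-(n-k)/n$, $-k/n$ appearing there are identified with $\mu_{\vertexa}, \mu_{\vertexb}, \mu_{\vertexc}, \mu_{\vertexd}$, that decomposition reads $p^{\vertexq \vertexr} = p^{\vertexa j \vertexb} - p^{\vertexc k \vertexd} + \sum_{\vertexv \in \mathcal{S}} \mu_{\vertexv} p^{\vertexv}$. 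I will split the calculation of $r_{\vertexq \vertexr}$ into a ``piecewise-linear'' contribution coming from $p^{\vertexa j \vertexb}$ and $p^{\vertexc k \vertexd}$, and a ``boundary'' contribution coming from $\sum_{\vertexv} \mu_{\vertexv} p^{\vertexv}$.

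For the piecewise-linear piece I read values directly from~\eqref{eq:pajb solution}. When $\vertexa \rightarrow \vertexb$ and $\vertexc \rightarrow \vertexd$ are distinct edges, $\vertexr$ is not in the interior of the subdivision of $\vertexa \rightarrow \vertexb$, so $p^{\vertexa j \vertexb}(\vertexq) - p^{\vertexa j \vertexb}(\vertexr) = j(n-j)/n$ and symmetrically $p^{\vertexc k \vertexd}(\vertexr) - p^{\vertexc k \vertexd}(\vertexq) = k(n-k)/n$; their combined contribution to $r_{\vertexq \vertexr}$ is $(j(n-j) + k(n-k))/n$. In the coincident-edge case one must additionally evaluate $p^{\vertexa j \vertexb}$ at $\vertex_k$ and $p^{\vertexc k \vertexd}$ at $\vertex_j$; a one-line case split on whether $j \le k$ or $k \le j$, using the two branches of~\eqref{eq:pajb solution}, produces exactly the extra $-2\min(j,k)(n - \max(j,k))/n$ correction in the statement.

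For the boundary piece, the key observation is the identity $\sum_{\vertexv \in \mathcal{S}} \mu_{\vertexv} = 0$. Writing $P = \sum_{\vertexv} \mu_{\vertexv} p^{\vertexv}$ and using symmetry of $L^+$, I compute $P(\vertexq) = \sum_{\vertexv} \mu_{\vertexv} p^{\vertexq}(\vertexv)$ and substitute $p^{\vertexq} = p^{\vertexa j \vertexb} + \mu_{\vertexa} p^{\vertexa} + \mu_{\vertexb} p^{\vertexb}$, which is extracted along the way to Proposition~\ref{prop:decomposition of solutions to poisson problems}. Since every $\vertexv \in \mathcal{S}$ is an endpoint, not an interior vertex, of any subdivided edge, $p^{\vertexa j \vertexb}$ takes a single constant value on $\mathcal{S}$, and that constant is annihilated by $\sum \mu_{\vertexv} = 0$. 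What remains collapses to $P(\vertexq) = \mu_{\vertexa} P(\vertexa) + \mu_{\vertexb} P(\vertexb)$; the analogous identity $P(\vertexr) = -\mu_{\vertexc} P(\vertexc) - \mu_{\vertexd} P(\vertexd)$ combines with it to give $P(\vertexq) - P(\vertexr) = \sum_{\vertexv,\vertexw \in \mathcal{S}} \mu_{\vertexv} \mu_{\vertexw} (L^+)_{\vertexv \vertexw}$. Substituting $(L^+)_{\vertexv \vertexw} = \tfrac{1}{2}\bigl((L^+)_{\vertexv \vertexv} + (L^+)_{\vertexw \vertexw} - r_{\vertexv \vertexw}\bigr)$ and applying $\sum_{\vertexv} \mu_{\vertexv} = 0$ one more time to kill the diagonal contributions leaves exactly $-\tfrac{1}{2} \sum_{\vertexv,\vertexw \in \mathcal{S}} \mu_{\vertexv} \mu_{\vertexw} r_{\vertexv \vertexw}$.

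The main obstacle I anticipate is not the core algebra but the bookkeeping when vertices in $\mathcal{S}$ coincide without the edges themselves coinciding, for example $\vertexa = \vertexc$ with $\vertexb \neq \vertexd$, or $\vertexa = \vertexb$ from a loop edge. I would verify that the argument proceeds unchanged in such ``partial coincidence'' cases: the piecewise-linear analysis uses only that interior subdivision vertices of one edge cannot coincide with any vertex of another edge's subdivision except at shared endpoints, and the boundary-piece reduction uses only the endpoint property, not the distinctness, of the elements of $\mathcal{S}$. Only the genuinely same-edge case, where $\vertexq$ and $\vertexr$ both sit inside one subdivision, forces the additional $-2\min(j,k)(n-\max(j,k))/n$ term.
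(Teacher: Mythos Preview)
Your proposal is correct and follows the paper's overall structure: decompose $p^{\vertexq\vertexr}$ via Proposition~\ref{prop:decomposition of solutions to poisson problems}, handle the piecewise-linear part with~\eqref{eq:pajb solution}, and reduce the ``boundary'' part to $-\tfrac12\sum\mu_\vertexv\mu_\vertexw r_{\vertexv\vertexw}$ using $\sum\mu_\vertexv=0$. The treatment of the same-edge correction and of partial coincidences among $\vertexa,\vertexb,\vertexc,\vertexd$ matches the paper's.

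There is one genuine, if localized, difference in the boundary-piece argument. The paper observes that $P=\sum_\vertexv\mu_\vertexv p^\vertexv$ satisfies $LP=0$ away from $\mathcal{S}$ (since $\sum\mu_\vertexv=0$ kills the constant $-1/\verticesV$), so Lemma~\ref{lem:harmonic implies linear} makes $P$ linear along each subdivided edge; this immediately gives $P(\vertexq)=\mu_\vertexa P(\vertexa)+\mu_\vertexb P(\vertexb)$ by interpolation. You reach the same identity by a different device: invoking the symmetry of $L^+$ to rewrite $P(\vertexq)=\sum_\vertexv\mu_\vertexv p^\vertexq(\vertexv)$, substituting the decomposition $p^\vertexq=p^{\vertexa j\vertexb}+\mu_\vertexa p^\vertexa+\mu_\vertexb p^\vertexb$, and then using that $p^{\vertexa j\vertexb}$ is constant on $\mathcal{S}$. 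Both routes are short and valid; the paper's harmonicity argument is perhaps slightly more conceptual (it explains \emph{why} $P$ interpolates linearly), while yours is more algebraic and avoids appealing to Lemma~\ref{lem:harmonic implies linear} at this stage.
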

We note that this is a generalization of Theorem~4.1 in~\cite{Carmona:2017ge} and Example 6 in~\cite{Chen:2010da} in two ways: our result applies to multigraphs with multiple and loop edges and also treats the cases of subdivisions with any number of inserted vertices.

We can use Proposition~\ref{prop:resistance formula} -- in particular \eqref{eq:decomposed resistance} -- to evaluate various statistical physical quantities for topological polymers in solution such as the hydrodynamic radius. 

\begin{proof}
We first observe that $\sum_{\vertexv \in \mathcal{S}} \mu_v = 0$ (regardless of the values of $j$ and $k$). We will use this observation regularly below. Next, 
let us define a new function $p^\mu := \sum_{\vertexv \in \mathcal{S}} \mu_\vertexv p^{\vertexv}$. If $\vertexv \in \mathcal{S}$ and $\vertexw \not\in \mathcal{S}$, then $L p^{\vertexv}(w) = -\frac{1}{\verticesV}$. Thus, if $\vertexw \not\in \mathcal{S}$, $(L p^\mu)(w) = -\frac{1}{\verticesV} \sum_{\vertexv \in \mathcal{S}} \mu_v = 0$. 

This means that Lemma~\ref{lem:harmonic implies linear} tells us that $p^\mu$ is linear on $\vertexa \rightarrow \vertexb$ and $\vertexc \rightarrow \vertexd$. Thus we can evaluate $p^\mu$ at $\vertexq$ by linearly interpolating values at $\vertexa$ and $\vertexb$ (and likewise for $\vertexr$, $\vertexc$ and $\vertexd$). In particular, keeping track of the signs in $\mu_c$ and $\mu_d$,
\begin{equation}
p^{\mu}(q) - p^{\mu}(r) = \sum_{\vertexw \in \mathcal{S}} \mu_w \, p^\mu(w).
\label{eq:plambda at q minus r}
\end{equation}
Now~\eqref{eq:resistance distance definition} tells us that $r_{\vertexq \vertexr} = p^{\vertexq \vertexr}(\vertexq) - p^{\vertexq \vertexr}(\vertexr)$. Further, using the definition of $p^\mu$, Proposition~\ref{prop:decomposition of solutions to poisson problems} can be rephrased as $p^{\vertexq \vertexr} = p^{\vertexa j \vertexb} - p^{\vertexc k \vertexd} + p^\mu$. Using the definition of $p^{\mu}$ and~\eqref{eq:plambda at q minus r}, we then have 
\begin{equation*}
r_{\vertexq \vertexr} = (p^{\vertexa j \vertexb} - p^{\vertexc k \vertexd})(\vertexq) - (p^{\vertexa j \vertexb} - p^{\vertexc k \vertexd})(r) + \sum_{\vertex,\vertexw \in \mathcal{S}} \mu_\vertex \mu_\vertexw \, p^\vertex(\vertexw).
\end{equation*}
We now consider the right hand side. Using~\eqref{eq:resistance distance definition} we can expand 
\begin{equation*}
\sum_{\vertexv, \vertexw \in \mathcal{S}} \mu_\vertexv \mu_\vertexw \, r_{\vertexv \vertexw} = 
\sum_{\vertexv, \vertexw \in \mathcal{S}} \mu_\vertexv \mu_\vertexw (p^{\vertexv}(\vertexv) + p^{\vertexw}(\vertexw) - p^{\vertexv}(w) - p^{\vertexw}(v)) 
= - 2 \sum_{\vertexv, \vertexw \in \mathcal{S}} \mu_\vertexv \mu_\vertexw \, p^{\vertexv}(\vertexw)
\end{equation*}
where we used $\sum_{\vertexv \in \mathcal{S}} \mu_\vertexv = 0$ in the second equality. Now assume that $\vertexa \rightarrow \vertexb$ and $\vertexc \rightarrow \vertexd$ are different. It remains only to prove that 
\begin{equation}
(p^{\vertexa j \vertexb} - p^{\vertexc k \vertexd})(\vertexq) - (p^{\vertexa j \vertexb} - p^{\vertexc k \vertexd})(r) = \frac{j(n-j) + k(n-k)}{n}.
\label{eq:last term}
\end{equation}
Using~\eqref{eq:pajb solution} in Proposition~\ref{prop:pajb}, we can compute that 
\begin{equation*}
p^{\vertexa j \vertexb}(\vertexq) - p^{\vertexa j \vertexb}(\vertexr) = \frac{j(n-j)}{n}
\end{equation*}
since $r$ is not on $\vertexa \rightarrow \vertexb$ so $p^{\vertexa j \vertexb}(\vertexr) = C$. This and the corresponding statement for $p^{\vertexc k \vertexd}(\vertexr) - p^{\vertexc k \vertexd}(\vertexq)$ establish~\eqref{eq:last term}. If $\vertexa \rightarrow \vertexb = \vertexc \rightarrow \vertexd$ the situation is slightly more vexing, since $\vertexr$ is on $\vertexa \rightarrow \vertexb$. In this case, we see that 
\begin{equation*}
p^{\vertexa j \vertexb}(\vertexr) = C + 
\begin{cases}
\frac{k(n-j)}{n}, & \text{if $k \leq j$} \\
\frac{j(n-k)}{n}, & \text{if $k \geq j$}
\end{cases}
= C + \frac{\min(j,k) (n - \max(j,k))}{n}.
\end{equation*}
\end{proof}

\subsection{Computing the Kirchhoff index of $\graphG_n$}

We are now interested in computing $\Kirchhoff(\graphG_n)$. It will be easier to compute $2 \Kirchhoff(\graphG_n) = \sum_{\vertexv, \vertexw \in \graphG_n} r_{\vertexv \vertexw}$. For convenience, we define
\begin{definition}
We call a vertex $\vertexv \in \graphG_n$ an \emph{interior} vertex if it is one of the vertices $\vertex_1, \dots, \vertex_{n-1}$ along the subdivision in $\graphG_n$ of an edge $\vertexa \rightarrow \vertexb$ in $\graphG$ (that is, it is not $\vertex_0 = \vertexa$ or $\vertex_n = \vertexb$). The set of interior vertices will be called $\mathcal{I}$. The interior vertices along $\vertexa \rightarrow \vertexb$ will be $\mathcal{I}(\vertexa \rightarrow \vertexb)$. We call vertices that are not interior~\emph{boundary} vertices, and note that they are also vertices of $\graphG$.
\end{definition}

It is now clear that 
\begin{equation}
2 \Kirchhoff(\graphG_n) = \sum_{\vertexq, \vertexr \in \graphG_n} \hspace{-0.05in} r_{\vertexq \vertexr} = 
\sum_{\vertexq, \vertexr \in \mathcal{I}}  r_{\vertexq \vertexr} + 
2 \sum_{\substack{\vertexq \in \mathcal{I}\\ \vertexr \in \mathcal{B}}} r_{\vertexq \vertexr} + \sum_{\vertexq, \vertexr \in \mathcal{B}} r_{\vertexq \vertexr}.
\label{eq:Kirchhoff structure}
\end{equation}

We are ready to prove Theorem~\ref{thm:degree Kirchhoff formula}, which is really the heart of Theorem~\ref{thm:main}:

\begin{degreeKirchhoff}
If $\graphG$ is a connected multigraph and $\graphG_n$ is the graph obtained by subdividing each edge of $\graphG$ into $n$ pieces, then 
\begin{equation*}
\lim_{n \rightarrow \infty} \frac{1}{n^3} \Kirchhoff(\graphG_n) = \frac{2 \Loops(\graphG) - 1}{12} \edgesE(\graphG) + \frac{1}{4} \DegreeKirchhoff(\graphG)
\end{equation*}
where $\Loops(\graphG) = \edgesE(\graphG) - \verticesV(\graphG) + 1$ is the cycle rank of $\graphG$ and $\DegreeKirchhoff(\graphG)$ is the ``degree-Kirchhoff index'' introduced by Chen and Zhang~\cite{Chen:2007dv}:
\begin{equation*}
\DegreeKirchhoff(\graphG) = \sum_{\vertex_i < \vertex_j \in \graphG} (\deg_\graphG \vertex_i \deg_\graphG \vertex_j ) r_{\vertex_i \vertex_j}.
\end{equation*}
\end{degreeKirchhoff}
We note that this result is compatible with the result of iteratively applying the subdivision operator given by Theorem~3.4 of~\cite{Yang:2015eq}, but that it doesn't follow from it-- iterating the usual subdivision operator can only give $2^k$-fold edge subdivisions of $\graphG$, and we don't know that $\lim_{n \rightarrow \infty} \frac{1}{n^3} \Kirchhoff(\graphG_n) = \lim_{k \rightarrow \infty} \frac{1}{(2^k)^3} \Kirchhoff(\graphG_{2^k})$ until we know that both limits exist. Further,~\cite{Yang:2015eq} doesn't identify the number of loops as part of the result and depends on a result of~\cite{Chen:2010da} which is only proved for graphs without multiple edges.

\begin{proof}
We will repeatedly use the fact that if $\vertexq$ and $\vertexr$ are in $\graphG$ and $\graphG_n$, then $r_{\vertexq \vertexr}^{\graphG_n} = n \, r_{\vertexq \vertexr}^{\graphG}$ because (from the point of view of $\vertexq$ and $\vertexr$) we can regard $\graphG_n$ as the electrical network obtained by replacing each edge of $\graphG$ with a resistor of resistance $n$.

Using Proposition~\ref{prop:resistance formula} we see that if $\vertexa \rightarrow \vertexb$ and $\vertexc \rightarrow \vertexd$ are different edges of $\graphG$, then 
\begin{equation}
\lim_{n \rightarrow \infty} \frac{1}{n^3} \sum_{\substack{\vertexq \in \mathcal{I}(\vertexa \rightarrow \vertexb) \\ \vertexr \in \mathcal{I}(\vertexc \rightarrow \vertexd)}} \hspace{-0.1in} r_{\vertexq \vertexr}^{\graphG_n} = \frac{1}{3} - \frac{1}{6} (r^\graphG_{\vertexa \vertexb} + r^\graphG_{\vertexc \vertexd}) + \frac{1}{4} (r^\graphG_{\vertexa \vertexc} + r^\graphG_{\vertexa \vertexd} + r^\graphG_{\vertexb \vertexc} + r^\graphG_{\vertexb \vertexd}).
\label{eq:two edge case}
\end{equation}
We now need to sum~\eqref{eq:two edge case} over pairs of (different) edges in $\graphG$. We start with
\begin{equation*}
\sum_{\substack{\vertexa \rightarrow \vertexb \neq \vertexc \rightarrow \vertexd\\\text{edges of $\graphG$}}} r^\graphG_{\vertexa \vertexb} + r^\graphG_{\vertexc \vertexd} =
2 \hspace{-0.1in} \sum_{\substack{\vertexa \rightarrow \vertexb \neq \vertexc \rightarrow \vertexd\\\text{edges of $\graphG$}}} \hspace{-0.1in} r^\graphG_{\vertexa \vertexb}
= 2 (\edgesE(\graphG) - 1) \hspace{-0.1in} \sum_{\substack{\vertexa \rightarrow \vertexb \\ \text{edge of $\graphG$}}} \hspace{-0.1in} r^\graphG_{\vertexa \vertexb} 
= 2 (\edgesE(\graphG) - 1)(\verticesV(\graphG) - 1).
\end{equation*}
The last step follows from Foster's theorem, originally proved by Foster~\cite{Foster:1948vk} for graphs and generalized to weighted graphs by Tetali~\cite{Tetali:1991es}. In turn, the weighted graph result implies the result for multigraphs: loop edges contribute nothing to resistance distance, since the resistance distance from any vertex to itself is clearly zero, and, since conductances add in parallel circuits, we can interpret $k$ edges connecting two vertices as a single edge of weight (i.e., conductance) $k$ for the purposes of computing resistance distances. 

We next consider 
\begin{equation*}
\sum_{\substack{\vertexa \rightarrow \vertexb \neq \vertexc \rightarrow \vertexd\\\text{edges of $\graphG$}}} r^\graphG_{\vertexa \vertexc} + r^\graphG_{\vertexa \vertexd} + r^\graphG_{\vertexb \vertexc} + r^\graphG_{\vertexb \vertexd} = 
\sum_{\substack{\vertexa \rightarrow \vertexb \\ \text{edge of $\graphG$}}} \left(
\sum_{\substack{\vertexc \rightarrow \vertexd \text{ edge of} \\ \text{ $\graphG - (\vertexa \rightarrow \vertexb)$}}} 
r^\graphG_{\vertexa \vertexc} + r^\graphG_{\vertexa \vertexd} 
+ r^\graphG_{\vertexb \vertexc} + r^\graphG_{\vertexb \vertexd} \right).
\end{equation*}
Choose any vertex $\vertexr \in \graphG$. In the inner sum, $\vertexr$ appears as $\vertexc$ once for every edge with tail $\vertexr$ in $\graphG - (\vertexa \rightarrow \vertexb)$ and $\vertexr$ appears as $\vertexd$ once for every edge with head $\vertexr$ in $\graphG - (\vertexa \rightarrow \vertexb)$. Note that loop edges pose no special difficulty-- $\vertexr$ may appear as both $\vertexc$ and $\vertexd$ for a single $\vertexc \rightarrow \vertexd$ in this case. Further, multiple edges also require no special treatment. Together, $r_{\vertexa \vertexr}$ and $r_{\vertexb \vertexr}$ appear $\deg_{\graphG - (\vertexa \rightarrow \vertexb)} \vertexr$ times in the inner sum, so 
\begin{equation*}
\sum_{\substack{\vertexa \rightarrow \vertexb \\ \text{edge of $\graphG$}}} \left(
\sum_{\substack{\vertexc \rightarrow \vertexd \text{ edge of} \\ \text{ $\graphG - (\vertexa \rightarrow \vertexb)$}}} 
r^\graphG_{\vertexa \vertexc} + r^\graphG_{\vertexa \vertexd} 
+ r^\graphG_{\vertexb \vertexc} + r^\graphG_{\vertexb \vertexd} \right) = 
\sum_{\substack{\vertexa \rightarrow \vertexb \\ \text{edge of $\graphG$}}} \sum_{\vertexr \in \graphG} (\deg_{\graphG - (\vertexa \rightarrow \vertexb)} r) (r^\graphG_{\vertexa \vertexr} + r^\graphG_{\vertexb \vertexr}).
\end{equation*}
Now $\deg_{\graphG - (\vertexa \rightarrow \vertexb)} r = \deg_\graphG r$ for all $\vertexr$ except $\vertexa$ and $\vertexb$, where we have $\deg_{\graphG - (\vertexa \rightarrow \vertexb)} a = \deg_\graphG a - 1$ and $\deg_{\graphG - (\vertexa \rightarrow \vertexb)} b = \deg_\graphG b - 1$. Thus
\begin{equation*}
\sum_{\substack{\vertexa \rightarrow \vertexb \\ \text{edge of $\graphG$}}} \sum_{\vertexr \in \graphG} (\deg_{\graphG - (\vertexa \rightarrow \vertexb)} r) (r^\graphG_{\vertexa \vertexr} + r^\graphG_{\vertexb \vertexr}) =
\sum_{\substack{\vertexa \rightarrow \vertexb \\ \text{edge of $\graphG$}}} \left( - 2 r_{\vertexa \vertexb} +
 \sum_{\vertexr \in \graphG} (\deg_\graphG \vertexr) (r^\graphG_{\vertexa \vertexr} + r^\graphG_{\vertexb \vertexr}) \right). 
\end{equation*}
The first sum is equal to $-2 (\verticesV(\graphG) - 1)$, again by Foster's theorem. As above, if we fix any $\vertexq \in \graphG$, it appears as $\vertexa$ in the sum once for every edge with tail $\vertexq$ and as $\vertexb$ in the sum above once for every edge with head $\vertexq$. Therefore, the term $r^\graphG_{\vertexq \vertexr}$ appears $\deg_\graphG \vertexq$ times in the sum. We have thus proved
\begin{align*}
\sum_{\substack{\vertexa \rightarrow \vertexb \\ \text{edge of $\graphG$}}} \left( - 2 r_{\vertexa \vertexb} +
 \sum_{\vertexr \in \graphG} (\deg_\graphG \vertexr) (r^\graphG_{\vertexa \vertexr} + r^\graphG_{\vertexb \vertexr}) \right) &=
-2 (\verticesV(\graphG) - 1) + \sum_{\vertexq, \vertexr \in \graphG} (\deg_\graphG \vertexq \deg_\graphG \vertexr ) r_{\vertexq \vertexr} \\
&= -2 (\verticesV(\graphG) - 1) + 2 \DegreeKirchhoff(\graphG).
\end{align*}
Putting all of this together, we can conclude that
\begin{equation*}
\lim_{n \rightarrow \infty} \frac{1}{n^3} \hspace{-0.1in}
\sum_{\vertexa \rightarrow \vertexb \neq \vertexc \rightarrow \vertexd}
\sum_{
\substack{
	\vertexq \in \mathcal{I}(\vertexa \rightarrow \vertexb) \\
	\vertexr \in \mathcal{I}(\vertexc \rightarrow \vertexd) 
}
}
\hspace{-0.1in} r_{\vertexq \vertexr}^{\graphG_n} = 
\frac{1 + 2\edgesE(\graphG) (\edgesE(\graphG) - \verticesV(\graphG)) - \verticesV(\graphG)}{6} + \frac{1}{2} \DegreeKirchhoff(\graphG).
\end{equation*}

Further, if $\vertexa \rightarrow \vertexb$ is any single edge of $\graphG$, then
\begin{equation*}
\lim_{n\rightarrow\infty} \frac{1}{n^3} \sum_{\vertexq, \vertexr \in \mathcal{I}(\vertexa \rightarrow \vertexb)}  r_{\vertexq \vertexr}^{\graphG_n} = \frac{1}{6} + \frac{1}{6} r^\graphG_{\vertexa \vertexb}, \label{eq:single edge case}
\end{equation*}
so we have (using Foster's theorem again),
\begin{equation*}
\lim_{n\rightarrow\infty} \frac{1}{n^3} \sum_{\vertexa \rightarrow \vertexb} \sum_{\vertexq, \vertexr \in \mathcal{I}(\vertexa \rightarrow \vertexb)}  r_{\vertexq \vertexr}^{\graphG_n} = \frac{\edgesE(G) + \verticesV(\graphG)-1}{6}.
\end{equation*}

We are now ready to return to~\eqref{eq:Kirchhoff structure} and assemble the pieces. Recalling that the cycle rank ${\Loops(\graphG) = \edgesE(\graphG) - \verticesV(\graphG) + 1}$, we get
\begin{equation}
\lim_{n \rightarrow \infty} \frac{1}{n^3} \sum_{\vertexq, \vertexr \in \mathcal{I}} r^{\graphG_n}_{\vertexq \vertexr} = \frac{2 \Loops(\graphG) - 1}{6} \edgesE(\graphG) + \frac{1}{2} \DegreeKirchhoff(\graphG).
\label{eq:limiting form 1}
\end{equation}
Using Proposition~\ref{prop:resistance formula} and summing directly, one finds that 
\begin{equation*}
\lim_{n \rightarrow \infty} \frac{1}{n^3}
\left( 2 \sum_{\substack{\vertexq \in \mathcal{I}\\ \vertexr \in \mathcal{B}}} r_{\vertexq \vertexr} + \sum_{\vertexq, \vertexr \in \mathcal{B}} r_{\vertexq \vertexr} \right) = 0,
\end{equation*}
so in fact~\eqref{eq:limiting form 1} proves the theorem.
\end{proof}

\section{Proof of Theorem~\ref{thm:main}}

We are now ready to prove Theorem~\ref{thm:main}. 

\begin{proof}
Recall from Theorem~\ref{thm:expected radius of gyration} that 
\begin{equation*}
\mathcal{E}\left(R_g^2;\graphG_n\right) = \frac{d}{\verticesV(\graphG_n)^2} \Kirchhoff(\graphG_n).
\end{equation*}
Since $\verticesV(\graphG_n) = \edgesE(\graphG)(n-1) + \verticesV(\graphG)$, we have
\begin{equation*}
\lim_{n \rightarrow \infty} \frac{1}{n} \mathcal{E}\left(R_g^2; \graphG_n \right) = \lim_{n \rightarrow \infty} \frac{d}{n \verticesV(\graphG_n)^2} \Kirchhoff(\graphG_n) =  \frac{d}{\edgesE(\graphG)^2} \lim_{n\rightarrow\infty} \frac{1}{n^3}\Kirchhoff(\graphG_n). 
\end{equation*}
Computing the last term with Theorem~\ref{thm:degree Kirchhoff formula}, we see that 
\begin{equation*}
\frac{d}{\edgesE(\graphG)^2} \lim_{n\rightarrow\infty} \frac{1}{n^3}\Kirchhoff(\graphG_n) = 
\frac{d}{2\edgesE(\graphG)} \left( \frac{1}{3} \Loops(\graphG) - \frac{1}{6} \right) + \frac{d}{4\edgesE(\graphG)^2} \DegreeKirchhoff(\graphG).
\end{equation*}
A result of Chen and Zhang~\cite{Chen:2007gi}, generalized to weighted graphs and hence to multigraphs by Chen~\cite[p.~1694]{Chen:2010da}, is that the degree-Kirchhoff index is related to the normalized graph Laplacian by
\begin{equation*}
\DegreeKirchhoff(\graphG) =  2 \edgesE(\graphG) \tr \mathcal{L}^+,
\end{equation*}
which completes the proof.
\end{proof}

\section{Contraction factors}

We can now study the asymptotic behavior of the expected radius of gyration in a subdivided graph. But it's still a little unclear why we should divide by $n$, and what these coefficients really mean from a polymer science point of view. We can give a nicer interpretation of our theorem by recalling the idea of the contraction factor (cf.~\cite[1.48]{stepto:2015be}). 

\begin{definition}
The \emph{contraction factor} $g$ of a TCRW with graph $\graphG$ and $\verticesV$ vertices is the ratio 
\begin{equation*}
g(\graphG) = \frac{\mathcal{E}(R_g^2; \graphG)}{\mathcal{E}(R_g^2; \text{ path graph with $\verticesV$ vertices})}.
\end{equation*}
\end{definition}

Combining Theorem~\ref{thm:main} with the formula~\eqref{eq:path gyradius} for the expected radius of gyration of the path graph yields Theorem~\ref{thm:contraction factors}, which we restate for convenience. 

\begin{gFactor}
For any connected multigraph $\graphG$ (including loop and multiple edges), if $\graphG_n$ is the $n$-part edge subdivision of $\graphG$, then the contraction factor $g(\graphG_n)$ of a Gaussian TCRW with underlying graph $\graphG_n$ embedded in $\R^d$ obeys
\begin{equation*}
g(\graphG_\infty) := \lim_{n \rightarrow \infty} g(\graphG_n) = \frac{3}{\edgesE(\graphG)^2} \left( \operatorname{Tr} \mathcal{L}^+(\graphG) + \frac{1}{3} \operatorname{Loops}(\graphG) - \frac{1}{6} \right).
\end{equation*}
\end{gFactor}

We may now use Theorem~\ref{thm:contraction factors} to estimate the relative sizes in solution of large topological polymers with different underlying graphs. Figure~\ref{fig:example} shows an example of a topological polymer synthesized in the Tezuka lab. A direct computation shows that the eigenvalues of the $6 \times 6$ matrix $\mathcal{L}(\graphG)$ are $2, \frac{5}{3}, 1, 1, \frac{1}{3}$ and $0$. This means that the eigenvalues of $\mathcal{L}^+(\graphG)$ are $\frac{1}{2}, \frac{3}{5}, 1, 1, 3$ and $0$. Since $\Loops(\graphG) = 4$ the result of Theorem~\ref{thm:contraction factors} is that
\begin{equation*}
g(\graphG_\infty) = \frac{109}{405} \simeq 0.269135.
\end{equation*} 
A quick numerical experiment shows that for the graph $\graphG$ in Figure~\ref{fig:example}, 
$g(\graphG_{10}) \simeq 0.270064$ (sample average over 1 million trials), which shows that Theorem~\ref{thm:contraction factors} is a quite good estimate even for small values of $n$. In fact, $g(\graphG_{2}) \simeq 0.261687$ (sample average over 1 million trials), so the estimate is useful even for $n=2$.

\begin{definition}
The \emph{relative contraction factor} of $\graphG_1$ and $\graphG_2$ is given by
\begin{equation*}
g(\graphG_1,\graphG_2) = \frac{\mathcal{E}(R_g^2; \graphG_1)}{\mathcal{E}(R_g^2; \graphG_2)} = \frac{g(\graphG_1)}{g(\graphG_2)}.
\end{equation*}
\end{definition}

\begin{figure}
\hfill
\raisebox{-0.5\height}{
\begin{overpic}[width=2in]{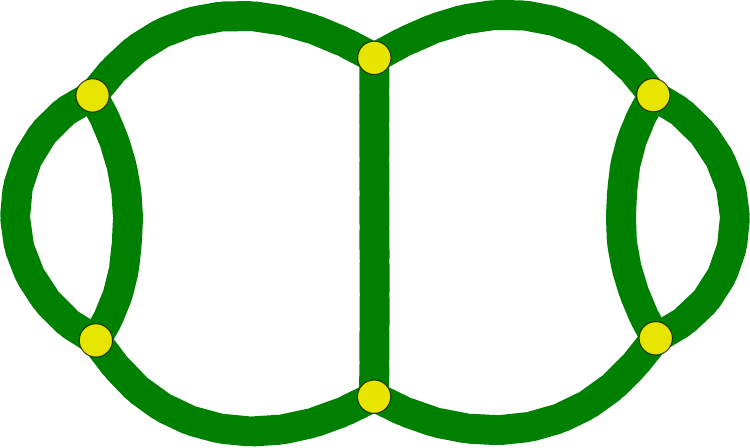}
\put(52.5,45){$\vertex_1$}
\put(52.5,12){$\vertex_2$}
\put(19,43){$\vertex_4$}
\put(19,12){$\vertex_6$}
\put(75,12){$\vertex_5$}
\put(75,43){$\vertex_3$}
\end{overpic}
}
\hfill
\begin{minipage}{3in}
\begin{equation*}
\mathcal{L}(\graphG) = 
\left(
\begin{array}{cccccc}
 1 & -\frac{1}{3} & -\frac{1}{3} &
   -\frac{1}{3} & 0 & 0 \\
 -\frac{1}{3} & 1 & 0 & 0 &
   -\frac{1}{3} & -\frac{1}{3} \\
 -\frac{1}{3} & 0 & 1 & 0 &
   -\frac{2}{3} & 0 \\
 -\frac{1}{3} & 0 & 0 & 1 & 0 &
   -\frac{2}{3} \\
 0 & -\frac{1}{3} & -\frac{2}{3} & 0
   & 1 & 0 \\
 0 & -\frac{1}{3} & 0 & -\frac{2}{3}
   & 0 & 1 \\
\end{array}
\right)
\label{eq:ngl}
\end{equation*}
\end{minipage}
\hfill
\hphantom{.}
\caption{A multigraph topological polymer synthesized in the Tezuka lab~\cite{Suzuki:2014fo} together with its normalized graph Laplacian $\mathcal{L}(\graphG)$.}
\label{fig:example}
\end{figure}

We now consider the example of the family of tricyclic and tetracyclic topological polymers shown in Figure~\ref{fig:predictions}. We chose these structure graphs because these polymers were actually synthesized by Suzuki, Yamamoto, and Tezuka~\cite{Suzuki:2014fo} in 2014. To compare our simple model to a more detailed simulation, we constructed random configurations with topological constraints and self-avoiding effects via LAMMPS with the TSUBAME supercomputer at the Tokyo Institute of Technology. As is standard for simulations in the Kremer-Grest model, we used a repulsive Lennard-Jones (LJ) potential to model steric effects and a finitely extensive and nonlinear elastic (FENE) potential to model the bonds between atoms.

We took conformations at every $5\tau$ steps, where the relaxation time $\tau$ was defined by the number of time steps at which the correlation between conformations was $1/e$. This relaxation time depends on the graph type of polymers and increases as the number of atoms grows. For instance, for the bipartite complete graph $K_{3,3}$ of $447$ atoms $\tau = 3.2 \times 10^5$. For the same graph with $897$ atoms, $\tau = 1.6\times 10^6$ steps. For the alpha graph with comparable numbers of atoms relaxation took longer, requiring  $\tau = 5.3\times 10^5$ for a $448$ atom graph and $2.0\times 10^6$ steps for an $898$-atom graph.

For polymer chains which have excluded volume, the expected squared radius of gyration should be proportional to $\verticesV^{2 \times 0.588}$ for large $\verticesV$. Thus we fitted the results of the molecular dynamics data to $\mathcal{E}(R_g^2;\graphG_n) = C_{\graphG} \verticesV(\graphG_n)^{1.176} + \Delta_{\graphG}$. We measured the quality of the fit to this model by $\chi^2/DF$ (degrees of freedom), and obtained values $\leq 1.1$ for all the graph types tested. We were then able to compute asymptotic contraction factors for our molecular dynamics data relative to the tree graph $\graphG^\text{tree}$ by taking 
\begin{equation}
g(\graphG_\infty,\graphG^\text{tree}_\infty)^\text{MD} = \frac{C_\graphG}{C_{\text{tree}}}.
\label{eq:lammps relative contraction}
\end{equation}
The resulting estimates of $g(\graphG_\infty,\graphG^\text{tree}_\infty)^\text{MD}$ are given in the table at left in~Figure~\ref{fig:predictions}, which also compares these results to the predictions of Theorem~\ref{thm:contraction factors}. We see that the extremely simple computation in the theorem, which requires only finding the eigenvalues of a small matrix, is very successful at predicting the results of molecular dynamics simulations which consumed dozens of hours of supercomputer time.

\begin{figure}
\hfill
\begin{tabular}{cccc} \hline \hline
 $\graphG$ & $g(\graphG_\infty,\graphG^\text{tree}_\infty)^\text{MD}$ &  $g(\graphG_\infty,\graphG^\text{tree}_\infty)$ \\ \hline
\raisebox{-0.5\height}{\includegraphics[height=0.25in]{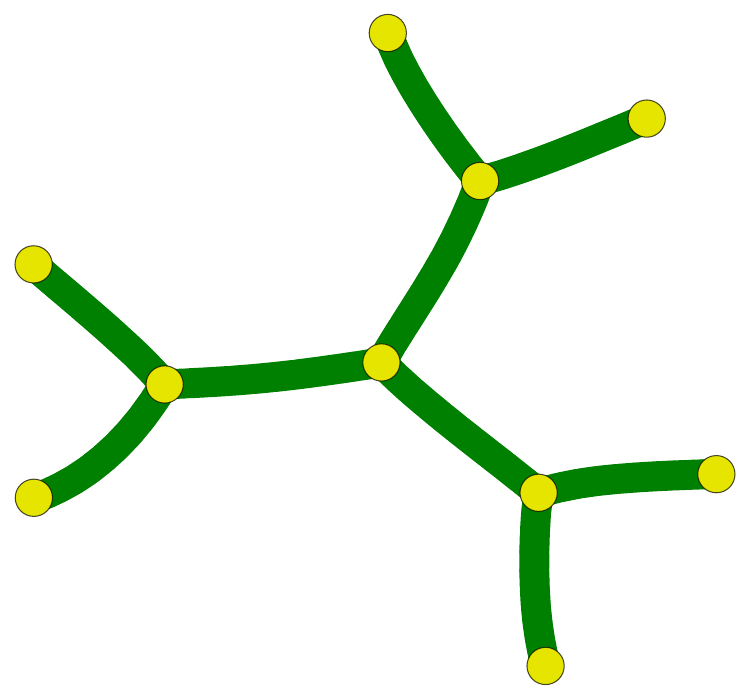}} & $1.0$ &  1 \\
\raisebox{-0.5\height}{\includegraphics[height=0.25in]{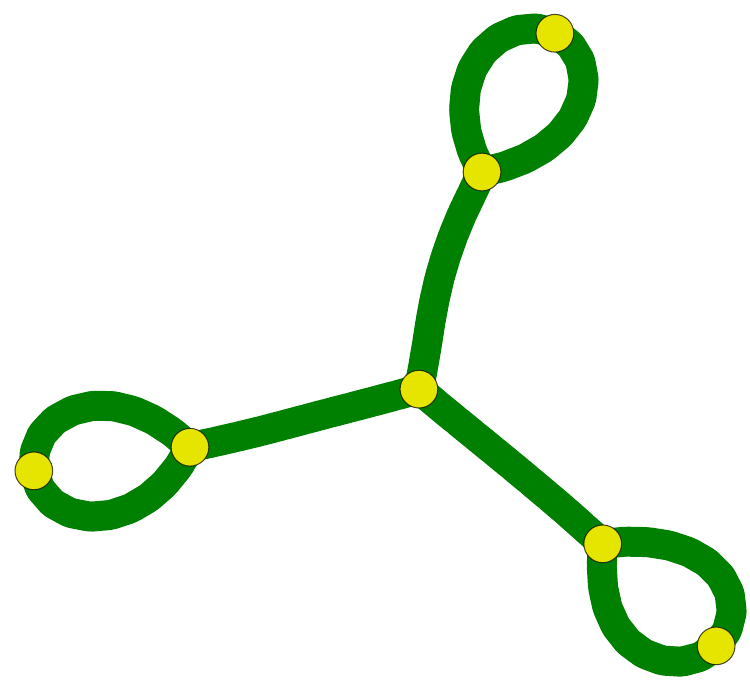}} & $0.962 \pm 0.034$ & $\nicefrac{43}{49}$ \\
 \raisebox{-0.5\height}{\includegraphics[height=0.25in]{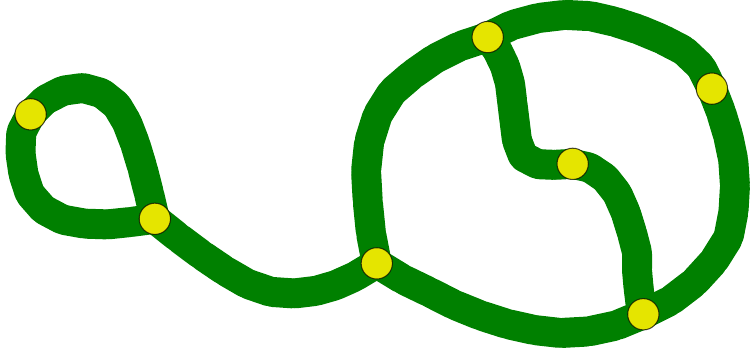}} & $0.782 \pm 0.026$ & $\nicefrac{31}{49}$ \\
 \raisebox{-0.5\height}{\includegraphics[height=0.25in]{tezuka-d.pdf}} & $0.582 \pm 0.015$ & $\nicefrac{109}{245}$  \\
 \raisebox{-0.5\height}{\includegraphics[height=0.25in]{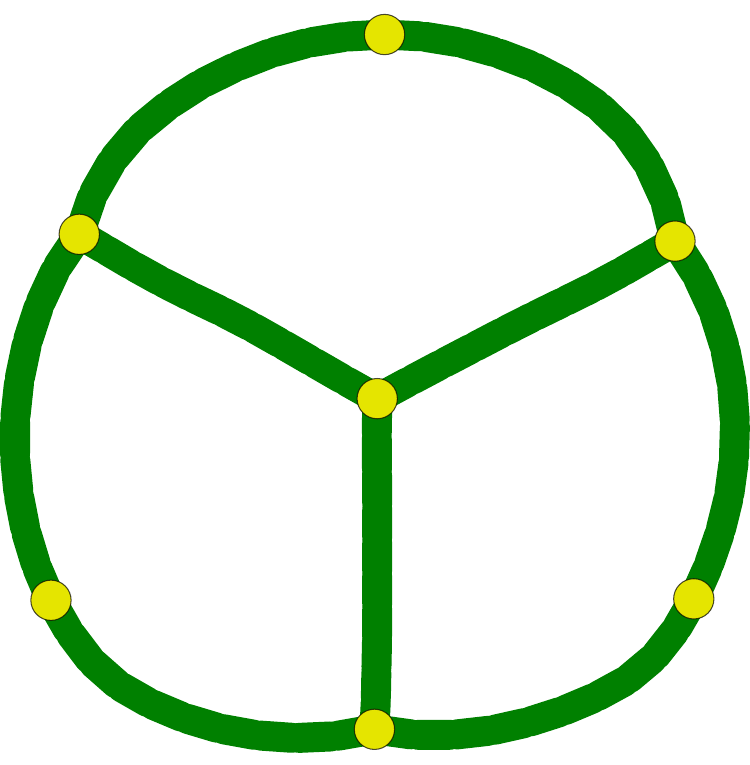}} & $0.546 \pm 0.016$ & $\nicefrac{107}{245}$  \\
 \raisebox{-0.5\height}{\includegraphics[height=0.25in]{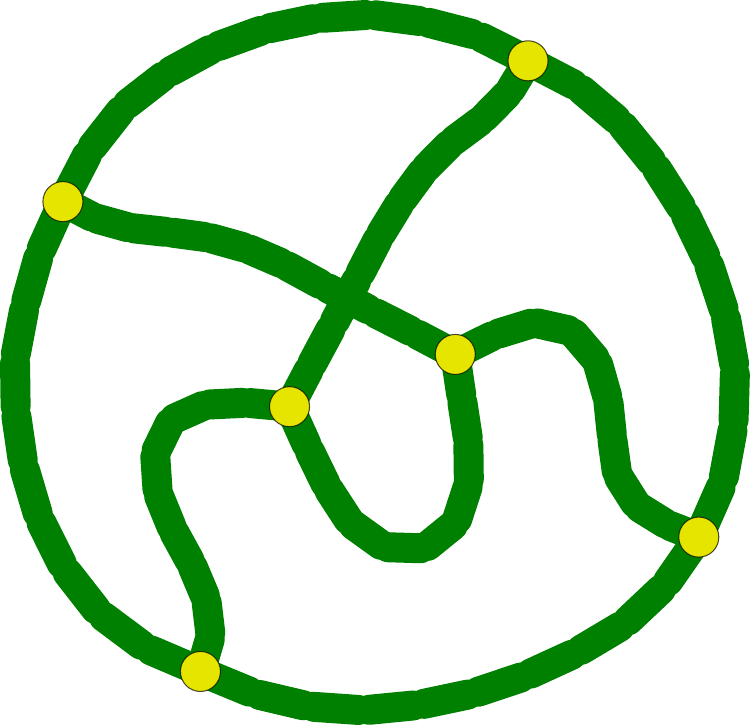}} & $0.445 \pm 0.011$  & $\nicefrac{17}{49}$  \\ \hline \hline
\end{tabular}
\hfill
\raisebox{-0.5\height}{
\begin{overpic}[width=2.5in]{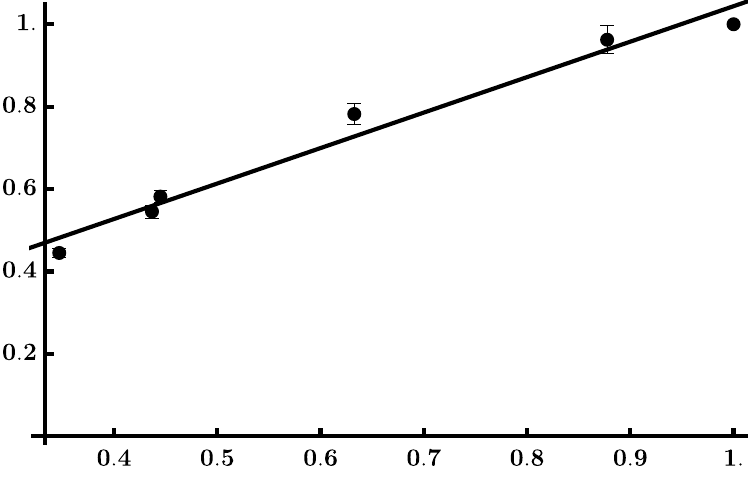}
\put(22,-5){Contraction Factor (Thm.~\ref{thm:contraction factors})}
\put(0,65){MD}
\end{overpic}}
\hfill
\hphantom{.}
\caption{The table above left shows six tricyclic and tetracyclic topological polymers synthesized by Suzuki, Yamamoto, and Tezuka~\cite{Suzuki:2014fo} with the asymptotic relative contraction factor estimated by molecular dynamics simulation using~\eqref{eq:lammps relative contraction} and the asymptotic relative contraction factor predicted by Theorem~\ref{thm:contraction factors}. The plot above right shows the molecular dynamics results ($y$-axis) versus the theoretical relative contraction factors ($x$-axis). The linear fit between molecular dynamics results and~Theorem~\ref{thm:contraction factors} has an $R^2$ (coefficient of determination) of $0.966$.
}
\label{fig:predictions}
\end{figure}

Although the fit in Figure~\ref{fig:predictions} is purely empirical, it is good enough that we might hope that a similar correlation will also hold for other topological polymers with complex graphs. Since estimates of the $g$-factor are obtained in molecular dynamics with excluded volume, while the contraction factors are derived from the ideal chain model of topological polymers with no excluded volume, this would give a dramatic reduction in computational complexity of estimating $g$-factors.  One reason for caution is that in all of our examples the functionality at each vertex is no more than three, so the mean-square radius of gyration may not be much enhanced by the excluded volume effect~\cite{uehara_statistical_2016}. 

\section{Conclusion and Future Directions}

Our approach to phantom network theory is conceptually quite simple: the key idea is that each coordinate vector $w^k$ of the edge vector $w$ is sampled from a multivariate Gaussian \emph{supported on the linear subspace $\im B^T$} of $\R^{\edgesE}$. However, we have seen that despite its simplicity, the model has a rich mathematical theory and significant explanatory power. 

There are many more interesting questions associated to this model. First, it would obviously be interesting to make the model more realistic by incorporating some notion of excluded volume. We note that this can't change the situation too much (at least for large configurations) as our molecular dynamics simulation did include these effects, but Theorem~\ref{thm:contraction factors} still explained the simulation data quite well. Second, it would be interesting to predict the expected value of other quantities used for proxies in SEC, such as the mean width or hydrodynamic radius~(cf. \cite{Wang:2010iz}). However, we think that one of the most promising future directions for study is the prediction of the properties of topological polymers whose underlying graph type is random, such as rubber or collagen. We have taken great care above to build a theory which applies to multigraphs for precisely this reason, as multiple and loop edges can't be excluded from most random graph models without significant difficulty.

\section{Acknowledgments}
We have benefitted tremendously from discussions with various colleagues and friends, and we would especially like to thank Yasuyuki~Tezuka and Satoshi Honda for conversations about topological polymers and Fan Chung for introducing us to spectral graph theory. We are also grateful to the Japan Science and Technology Agency, which sponsored conferences at Ochanomizu University and the Tokyo Institute of Technology that catalyzed the present work, and to the Simons Foundation (\#524120 to Cantarella, \#354225 to Shonkwiler), the Japan Science and Technology Agency (CREST Grant Number JPMJCR19T4), and the Japan Society for the Promotion of Science (KAKENHI Grant Number JP17H06463) for their support.

\bibliography{tcrw-cites,tcrwpapers-special,tcrwpapers}

\end{document}